\documentclass{lmcs}

\usepackage{soul}
\usepackage{url}
\usepackage[hidelinks]{hyperref}
\usepackage[utf8]{inputenc}
\usepackage[small]{caption}
\usepackage{graphicx}
\usepackage{amsmath}
\usepackage{amsthm}
\usepackage{booktabs}
\usepackage{algorithm}
\usepackage{algorithmic}
\newtheorem{example}{Example}
\newtheorem{theorem}{Theorem}

\usepackage{amssymb}
\usepackage{amsthm}
\usepackage{ marvosym }

\newtheorem{observation}[theorem]{Observation}

\theoremstyle{definition}

\newtheorem*{problem*}{Decision Problem}
\usepackage{ marvosym }
\usepackage{amsmath}
\usepackage{mathtools}
\usepackage{tabularx}
\usepackage{tikz-cd}
\usepackage{forest}
\usepackage{hyperref}
\hypersetup{
    colorlinks=true,
    linkcolor=blue,
    filecolor=magenta,      
    urlcolor=cyan
}

\newtheorem{manualtheoreminner}{Theorem}
\newenvironment{manualtheorem}[1]{%
  \IfBlankTF{#1}
    {\renewcommand{\themanualtheoreminner}{\unskip}}
    {\renewcommand\themanualtheoreminner{#1}}%
  \manualtheoreminner
}{\endmanualtheoreminner}
\newtheorem{manuallemmainner}{Lemma}
\newenvironment{manuallemma}[1]{%
  \IfBlankTF{#1}
    {\renewcommand{\themanuallemmainner}{\unskip}}
    {\renewcommand\themanuallemmainner{#1}}%
  \manuallemmainner
}{\endmanuallemmainner}

\usepackage{tikz}
\usetikzlibrary{automata, positioning, arrows, shapes}

\usepackage{mathtools}          

\newcommand{\s}{{\tt s}}
\newcommand{\G} {{\tt G}}
 \newcommand{\Salt} {{\tt S}}
\newcommand{\Oalt} {{\tt O}}
\newcommand{\Win}{\mathsf{Win}}

\begin{document}

\title[Modeling Concurrent Multi-Agent Systems]{Modeling Concurrent Multi-Agent Systems}

\author[S. Rajasekaran]{Senthil Rajasekaran\lmcsorcid{0000-0002-6675-8063}}
\author[M. Y. Vardi]{Moshe Y. Vardi\lmcsorcid{0000-0002-0661-5773}}

\address{Université Libre de Bruxelles \thanks{A portion of this work was done while the author was a PhD student at Rice University}}	
\email{senthilrajasekaran@ulb.be}

\address{Rice University Department of Computer Science}	
\email{vardi@rice.edu} 
\pagenumbering{arabic} 
\setcounter{page}{1}

\begin{abstract}
Recent work in the field of multi-agent systems has sought to use techniques and concepts from the field of formal methods to provide rigorous theoretical analysis and guarantees on complex systems where multiple agents strategically interact, leading to the creation of the field of equilibrium analysis, which studies equilibria concepts from the field of game theory through a complexity-theoretic lens. Multi-agent systems, however, are complex mathematical objects, and, therefore, defining them in a precise mathematical manner is non-trivial. As a result, researchers often considered more restrictive models that are easier to model but lack expressive power or simply omitted critical complexity-theoretic results in their analysis. This paper addresses this problem by carefully analyzing and contrasting complexity-theoretic results in the explicit model, a mathematically precise formulation of the models commonly used in the literature, and the circuit-based model, a novel model that addresses the problems found in the literature. The utility of the circuit-based model is demonstrated through a comprehensive analysis that considers upper and lower bounds for the realizability and verification problems, the two most important decision problems in equilibrium analysis, for both models. By conducting this analysis, we see that problematic issues that are endemic to the explicit model and the equilibrium analysis literature as a whole are adequately handled by the circuit-based model.  
\end{abstract}

\maketitle

\section{Introduction}\label{intro}

Systems in which multiple independent agents interact with one another have become ubiquitous in the modern landscape of computer science~\cite{SLmultiagentbook,gullí2025agentic}. This has motivated research into formal models for multi-agent systems, which are mathematical constructions that model systems where independent agents strategically interact with one another. When multiple agents interact with one another, the total number of potential outcomes increases exponentially, potentially leading to unsafe behaviors. Therefore, researchers from the field of \emph{formal verification} have recently begun to analyze multi-agent systems as part of a broad research program to provide rigorous theoretical guarantees to complex multi-agent systems~\cite{wooldridge2009introduction,aibook}. 

The traditional setting in the formal-verification literature consists of a system that aims to enforce a property and an environment assumed to be entirely antagonistic to the system's goals \cite{pnueli1992temporal}. Here, the assumption of antagonism enables a clear mathematical formalism that completely defines the behaviors of both the system and the environment. In multi-agent systems, however, a strictly antagonistic relationship between the agents is no longer a reasonable assumption, as each agent has its own goal and incentives. Therefore, within the execution of a single multi-agent system, the relationship between agents may switch back and forth between being characterized by cooperation, competition, mutual non-interest, and many other modes. To provide a precise mathematical characterization of this complex behavior, researchers in formal methods have turned to \emph{equilibria}, which are solution concepts for multi-player games from the field of \emph{game theory}~\cite{Osborne1994}.  This has led to the development of the subfield of \emph{equilibrium analysis}, which is also called \emph{rational verification} in some contexts~\cite{ratver}, within the formal methods literature.

In equilibrium analysis, the complexity-theoretic properties of decision problems related to equilibria are analyzed in multi-agent systems. The broad motivation is that since equilibria allow for a precise mathematical characterization of the complex interactions between agents in multi-agent systems, studying equilibria in multi-agent systems allows for a precise mathematical analysis of how multi-agent systems scale. Understanding how these systems scale is crucial to identifying the bottlenecks that arise within them and the challenges that arise in providing formal guarantees. This idea has led to many recent research works, see~\cite{GPW17,parkerverification,raskinsubgame2,PRISM,EVE,rajasekaranthesis} for a few examples.

In a multi-agent system, each agent is associated with a ``local state" that details the information that is relevant to the agent. It is not enough, however, to consider the ``total state" of the multi-agent system to be a product of the local states, as it is often necessary to consider an ``environment" which contains global information~\cite{reasoningaboutknowledge}, such as the state of resources that are not controlled by a single agent. The addition of global states also adds interactivity to the multi-agent system, as agents can influence the decisions of other agents through changes in the global state. This type of interactivity is usually mathematically formalized through a \emph{transition table}, which details how the global state updates based on the collective choices of the agents~\cite{reasoningaboutknowledge}

A common assumption is that agents interact with the system and its global states through choices of ``actions". We further assume that each agent in the system has at least two actions to choose from, which allows each agent to make a meaningful choice. This implies that if there are $k$ agents in the system, the transition table must account for the $2^k$ different collective action choices available to the agents collectively. Therefore, in a multi-agent system, the transition can be viewed as a construction that is exponential in the number of agents. This has profound impacts, detailed below, on the complexity-theoretic analysis of multi-agent systems, as the size of the representation of the input is a central issue in the field of complexity theory~\cite{papacomplexity}.

From an applied perspective, this decouples the practical performance of tools from the theoretical results obtained in equilibrium analysis. This can be seen by considering state-of-the-art tools such as~\cite{PRISM,MCMAStool}, which consider transition table models that are worst-case exponential. This worst-case scenario, however, is never realized in the practical applications of these tools to existing systems of interest, as most practical systems have simpler transition rules that do not require considering every possible collective action choice independently (see, for example, the discussion in~\cite{factoredMDP}). This suggests a gap between theoretical analysis and practical performance. This gap limits the applicability of the complexity-theoretic results of equilibrium analysis to the performance of multi-agent verification tools.  

From a theoretical perspective, which is our focus in this paper, the most significant issue arising from the size of the transition table is the lack of lower bounds in the field of equilibrium analysis. Since the table, which is part of the input, has exponential size in the number of agents, this severely limits the number of agents that can be used in a reduction. These restrictions often disqualify the most natural lower-bound reduction strategies, and currently, there are no new techniques in the literature to provide lower bounds under these restrictions. This issue has led to many missing lower bounds in the literature, with~\cite{RBV23} being one such example.

In response, researchers often restrict general concurrent multi-agent systems in order to avoid this issue of representation, as restricted models often allow for the use of smaller transition tables, which lifts the restriction on the number of agents used in the lower bound. Two major examples of this in the literature are the \emph{iterated boolean game} (iBG)~\cite{iBG} and the turn-based game. In an iBG, the set of collective actions and the set of global states are required to be the same set, meaning that the transition information can be computed through a trivial identity function. In a turn-based game, concurrency is removed, and only one agent chooses a relevant action at a global state, greatly reducing the size of the transition table~\cite{sofiagamesnotes}. Other, more recent examples include the ``bounded-channel"~\cite{RV22} and the ``bounded-concurrency" model~\cite{RV25}.  All of these models are constructed with the same purpose in mind: by sacrificing the general concurrency of a multi-agent system through restricted models, it becomes possible to restrict the size of the transition table, which allows for the establishment of lower bounds. 

While considering restrictions is a valid workaround, this leads to a ``model gap" that devalues the complexity-theoretic results obtained.  Consider~\cite{BBMU15}, which explicitly discusses the issues caused by explicit transition table representation (see Remark 2.2). In that work, an NP upper bound is given for the existence of a ``pure Nash equilibrium" in a concurrent multi-agent system. To provide a matching lower bound, a turn-based system is employed. The intuition there is that since a turn-based system can be naturally considered as a special type of concurrent system, a lower bound on turn-based systems naturally extends to concurrent systems. Yet note that the reduction (Section 5.1.3) \emph{does not} hold for general concurrent systems, as it is the specific use of a turn-based system that allows for a transition table that can be constructed in polynomial time. This is the \emph{model gap}, a common pattern in equilibrium analysis, in which an upper bound is shown for a highly general system, but the lower bound can only be demonstrated for a special restriction of the system that admits a beneficial representation. The model gap obscures the true complexity of equilibrium decision problems by making it unclear whether the lower bound truly applies to the general model, a point we explicitly discuss in Section~\ref{realizabilitydiscussion} for the NP-completeness result of~\cite{BBMU15} . Finally, it should be noted that in~\cite{BBMU15}, the use of this explicit transition table model is motivated by a desire to compare the results in the paper to results from other papers in the literature, as the explicit transition table model is ubiquitous in equilibrium analysis.

This brings us to another salient point. While the use of a monolithic table is common practice in the equilibrium analysis literature~\cite{BBMU15}, it is not common practice at all in computer science as a whole. Rather, the common practice is to construct a small program that computes the transition information, rather than explicitly listing it in a monolithic table structure, cf. \cite{burch1994symbolic}. This design pattern is reflected in the examples given by tools such as~\cite{PRISM,MCMAStool}. Therefore, the use of explicit transition tables also disconnects insights gained from complexity-theoretic analysis from the behavior of practical multi-agent-systems analysis tools.

In this paper, we address these issues by introducing  \emph{circuit-based} multi-agent systems. In contrast with explicit transition tables, which are common in the equilibrium literature, the circuit-based model uses sequential boolean circuits to compute state transition information.  Using circuits instead of tables enables us to describe large transition tables succinctly, regardless of the number of agents.
This addresses the model gap, as evidenced by the lower-bound results presented in the paper. Furthermore, the use of circuit-based transition systems is directly aligned with practical representations of multi-agent systems, allowing for much better application of the insight gained from theoretical results to practical systems. These advantages are reflected in highly influential previous works that study circuit-based models in other settings, such as~\cite{seqcircuits,succinctgraphs}.

The paper proceeds as follows. First, we introduce the \emph{explicit} and \emph{circuit-based} multi-agent systems. As previously mentioned, the explicit system is the system often found in the literature, while the circuit-based system is a new contribution and the focus of this paper. We then move on to introduce broad mathematical notions of strategies and our equilibrium solution concept, the $W$-Nash Equilibrium ($W$-NE)~\cite{RV21,RV22}. The $W$-NE is a slight adaptation of the most popular equilibrium concept in the equilibrium analysis literature and game theory literature in general, the Nash equilibrium, for deterministic systems. Furthermore, we discuss the different ways strategies are represented in the explicit and circuit-based systems. We then analyze the two most important equilibrium decision problems of \emph{realizability}~\cite{PnuRos89a}, which is to determine if an equilibrium exists, and \emph{verification}~\cite{Ros92}, which is to check if an input profile of strategies satisfies the equilibrium condition for both the explicit and circuit-based systems in order to compare their complexity-theoretic properties.  Since one of the major contributions of this paper was the introduction of a new model, the new directions made possible by this model, such as the ability to perform more precise multivariate complexity analysis, see e.g.~\cite{Vardi82}, represent an important aspect of this work. Beyond that, further directions for both the model and the line of work in general are discussed in Section~\ref{Conclusion}.

Although this paper contains several novel technical results, its focus is conceptual. The main purposes of this paper are to demonstrate the \emph{model gap} phenomenon and to advocate for the use of succinctly represented circuit-based systems in the literature. For this reason, the paper is organized to first provide the reader with a high-level overview of our results and discussions of how these results impact the overall literature in Sections~\ref{realizabilitydiscussion} and~\ref{verificationdiscussion}. For readers interested in delving into the technical content, the lower-level details of the proofs and constructions underlying our technical results are provided in Section~\ref{appendix}.

\section{The Explicit and Circuit-Based Models}\label{modelsection}

In this section, we introduce the \emph{explicit} and \emph{circuit-based} multi-agent system models. The components and functions of the two systems are the same; the difference between the two lies in how they are represented.

\begin{defi}[Explicit Multi-Agent System]\label{explicitsystemdef}
    An \emph{explicit multi-agent system} is given by a 6-tuple $\mathbb{G} = \langle V, v_0, \Omega = \{0 \ldots k-1\}, A = \{A_0 \ldots A_{k-1} \}, G = \{ G_0 \ldots G_{k-1} \}, \tau \rangle$ with the following interpretations:
        {\bf (1)} $V$ is a finite set of \emph{global states}. $v_0 \in V$ is the \emph{initial state}.
        {\bf (2)} $\Omega = \{ 0 \ldots k-1 \} $ is the set of $k$ \emph{agents}. Following common convention in computer science, the first agent is Agent $0$, making the last Agent $k-1$.
        {\bf (3)} $A = \{A_0 \ldots A_{k-1}\}$ is the set of \emph{action sets}. The set $A_i$ is associated with Agent $i$. As mentioned before, we assume that $\forall i \in \Omega. |A_i| \geq 2$, as this allows each agent to have a meaningful choice of action.  For notational convenience, we also introduce the set of \emph{decisions} $D = \bigtimes_{i \in \Omega} A_i$.
        {\bf (4)} $G = \{ G_0 \ldots G_{k-1} \}$ is the set of \emph{agent goals}. Each goal $G_i \subseteq V$ is a \emph{reachability} goal, a decision we explicitly motivate in Section~\ref{whyreachability}. The goal $G_i$ is associated with Agent $i$.
        {\bf (5)} $\tau$ is the transition function of type $V \times D \rightarrow V$. Of particular importance to this model is that $\tau$ is represented by an \emph{explicit table} with $|V| \cdot |D|$ rows. 

\end{defi}
Note that the transition table in the explicit multi-agent system has an explicit, well-defined notion of \emph{size}, as it is represented by a table with \emph{exactly} $|V| \cdot |D|$ rows. Since $\forall i \in \Omega. |A_i| \geq 2$, we have $|D| \geq 2^k$, meaning that the size of the transition table is exponential in the number of agents. Although this is ``large", we require $|V| \cdot |D|$ rows in order to give $\tau$ a fixed representation and size, as the lack of this notion of size is the major factor behind the model gap.  As opposed to common practice in the equilibrium analysis literature (see the discussion centered around Remark 2.2 in~\cite{BBMU15}), we deliberately take this size into account in order to provide sharper complexity-theoretic bounds. This point, one of the main themes of this paper, is explored in much greater detail in the technical sections below.

The explicit multi-agent system works as follows. The system starts in state $v_0$. At this state, each agent chooses an action from their action set. Their collective choice, a member of the set $D$ of decisions, along with the current state $v_0$, then determines a new state that the system transitions to as specified by the transition function $\tau$. This process then repeats an infinite number of times. Agents interact with the system in order to fulfill their goal, which is to eventually reach a state that belongs to their reachability goal (recall that each goal $G_i$ is a subset of the set $V$ of states). Therefore, if the system visits a state in $G_i$, Agent $i$'s goal is \emph{satisfied}, otherwise it is not. We now move on to motivate why reachability goals specifically are used in this paper.

\paragraph{The Use of Reachability Goals}\label{whyreachability}

Reachability goals are arguably the simplest type of non-trivial agent goal, and it is exactly for this reason that we elect to use them. In the broader equilibrium-analysis literature, it is often common to use infinite-horizon temporal logic goals such as LTL, e.g.~\cite{iBG}. Recently, it has also become increasingly common to use finite-horizon temporal logics such as LTL\textsubscript{f} and LDL\textsubscript{f}~\cite{multildl}. While these goals are expressive, it is also computationally expensive to reason about them even in non-multi-agent settings~\cite{GV13}. This leads to results in which the sub-problem of reasoning about the goals alone drives the entire complexity of reasoning about equilibria in multi-agent systems; an example is the 2EXPTIME upper bound for realizability in~\cite{GPW17,multildl}, which is completely driven by the complexity of reasoning about LDL\textsubscript{f}. This type of result says little about the complexity of reasoning about the strategic interactions in the multi-agent system. Our goal in this paper is to analyze the effect of input representation on overall complexity for equilibrium decision problems in multi-agent systems. Therefore, we opt to keep the goals, which have their own separate issue of representation~\cite{RV22}, as simple as possible in order to isolate the effect of multi-agent system representation. 

We now move on to define the \emph{circuit-based multi-agent system}. The circuit-based system has components that are directly analogous to the components in the explicit system. The major difference is that some of the components have different representations, with the circuit-based system having components that are represented implicitly through assignments to boolean variables. In order to differentiate between components of the explicit system and components of the circuit-based system that serve the same function but have a different representation, we use the ${\tt typewriter}$ font, e.g. $A$ vs $\tt{A}$.

\begin{defi}[Circuit-Based Model]\label{circuitbasedmodeldef}
    A \emph{circuit-based mutli-agent system} is a 6-tuple $\mathcal{G} = \langle \tt{V}, \tt{v_0}, \Omega = \{0 \ldots k-1 \}, \tt{A} = \{ \tt{A_0} \ldots \tt{A_{k-1}}\}, \tt{G} = \{ \tt{G_0} \ldots \tt{G_{k-1}} \}, \varphi \rangle$ with the following interpretations:
        {\bf (1)} $\tt{V}$ is a set of environment \emph{variables}. The states of the system are then the assignments to these variables, i.e. $2^{\tt{V}}$. This means that the ``full" set of states is represented implicitly as opposed to as an explicit set of states as in the explicit model. $\tt{v_0} \in 2^{\tt{V}}$ is the \emph{initial state}.
        {\bf (2)} $\Omega = \{0 \ldots k-1 \}$ is once again the set of agents. There is no difference in the representation of this component between the explicit and circuit-based models.
        {\bf (3)} $\tt{A} = \{ \tt{A_0} \ldots \tt{A_{k-1}}\}$ is the set of action \emph{variable} sets. Each $\tt{A_i}$ is a set of variables, and, just like the set of states, the explicit representation of the action set for Agent $i$ is given by $2^{\tt{A}_i}$. We again introduce the set  $\tt{D} = \bigtimes_{i \in \Omega} {\tt{A_i}}$ of decisions.
        {\bf (4)} $\tt{G} = \{ \tt{G_0} \ldots \tt{G_{k-1}} \}$ is the set of goals, which are once again reachability goals as motivated in Section~\ref{whyreachability}. Each $\tt{G_i}$ is represented as a \emph{combinational circuit} that computes the indicator function of the corresponding reachability set, i.e. $\tt{G_i}: 2^{\tt{V}} \rightarrow \{0,1\}$, with an output of $1$ implying the state is in the set and $0$ implying the state is not in the set. 
        {\bf (5)} $\varphi: 2^{\tt{V}} \times 2^{\tt{D}} \rightarrow 2^{\tt{V}} $ is the transition function, represented by a combinational circuit. 
\end{defi}

The circuit-based system functions in exactly the same manner as the explicit model. The system starts in state $\tt{v_0}$. Agents pick actions, which amount to an assignment to their set of action variables $\tt{A_i}$. This transitions the system to another state, which goes on ad infinitum. The Agent $i$ interacts with the system with the goal of visiting a state $\tt{v} \in 2^{\tt{V}}$ such that $\tt{G_i}(\tt{v}) = 1$.  The difference between the two systems lies in their representation. Namely, instead of explicit sets (such as the state set $V$), implicit representations using boolean variables are given. For functions, we use combinational circuits instead of explicit tables. The relationship between circuit-based and explicit systems is described in the following observation.

\begin{observation}\label{conversion}
A circuit-based system can be unfolded into an explicit one through the following procedure:
    {\bf (1)} Implicitly represented sets such as $\tt{V}$ and $\tt{A_i}$ are unfolded into exponential-sized explicit representations. Instead of $\tt{V}$, for example, we explicitly construct every element of $2^{\tt V}$.  Note that the set $\Omega$ has the same representation in both systems, and therefore, this step does not apply to $\Omega$.
    {\bf (2)} Functions represented by circuits, such as the transition function $\varphi: 2^{\tt{V}} \times 2^{\tt{D}} \rightarrow 2^{\tt{V}} $, are represented by tables that explicitly list each input/output pair. These table representations of circuits are at most exponential in the size of the original circuit in general~\cite{papacomplexity}. 
\end{observation}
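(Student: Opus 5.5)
The plan is to give the unfolding explicitly and check that it yields an explicit multi-agent system in the sense of Definition~\ref{explicitsystemdef}, with the same behaviour, and then bound its size. Given $\mathcal{G} = \langle {\tt V}, {\tt v_0}, \Omega, {\tt A}, {\tt G}, \varphi \rangle$, I will construct $\mathbb{G} = \langle V, v_0, \Omega, A, G, \tau \rangle$ component by component:
\begin{itemize}
\item the state set is $V := 2^{\tt V}$, with initial state $v_0 := {\tt v_0}$;
\item $\Omega$ is kept unchanged;
\item each action set is $A_i := 2^{{\tt A}_i}$, so that $D = \bigtimes_{i} 2^{{\tt A}_i}$, which is in canonical bijection with the assignments $2^{\tt D}$ to all action variables;
\item each goal is $G_i := \{ v \in 2^{\tt V} : {\tt G_i}(v) = 1 \}$;
\item the transition table $\tau$ has one row for each pair $(v, d) \in V \times D$, whose entry is $\varphi(v,d)$.
\end{itemize}

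The first step is to verify that this is a legal explicit system. The only side condition in Definition~\ref{explicitsystemdef} is $|A_i| \geq 2$. This holds as long as each ${\tt A}_i$ is nonempty, since then $|2^{{\tt A}_i}| \geq 2$. I would note this as an implicit assumption on circuit-based systems, mirroring the explicit definition.

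Next, I check behavioural equivalence. The plays of the two systems coincide: they start in the same state, and from any state the same decision leads to the same successor, because $\tau(v,d) = \varphi(v,d)$ by construction. Moreover, a state lies in $G_i$ exactly when ${\tt G_i}$ outputs $1$ on it. An easy induction on the length of a play then shows that every agent's reachability goal is satisfied on exactly the same plays in both systems.

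Finally, the size bound. Write $n = |{\tt V}|$ and $m = \sum_i |{\tt A}_i|$, both at most the size of the circuit $\varphi$, since these variables are its inputs. The table for $\tau$ then has $2^{n} \cdot 2^{m} = 2^{n+m}$ rows, and each row has $O(n+m)$ bits. To fill the table, I evaluate $\varphi$ once per row, which takes time polynomial in $|\varphi|$. The goal sets are obtained in the same way, by evaluating each ${\tt G_i}$ on all $2^{n}$ states. Altogether the explicit system has size, and can be built in time, $2^{O(|\mathcal{G}|)}$, matching the general exponential bound cited from~\cite{papacomplexity}.

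The only delicate point is the bookkeeping in this last step: the exponent should be measured in the number of input variables rather than the number of gates, and the final bound should be stated as at most exponential. There is no genuine mathematical obstacle.
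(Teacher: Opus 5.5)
Your proposal is correct and follows essentially the same route as the paper, which states the unfolding ($2^{\tt V}$ as the state set, $2^{{\tt A}_i}$ as the action sets, circuits replaced by input/output tables of at most exponential size) as an observation without further argument. Your extra checks are sound elaborations of details the paper leaves implicit: that $|A_i| \geq 2$ requires each ${\tt A}_i$ to be nonempty, that plays and goal satisfaction coincide in the two systems, and that the table can be filled in time $2^{O(|\mathcal{G}|)}$ by evaluating the circuit once per row.
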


Having defined systems, the main question is then how agents choose their actions. This is done through the use of \emph{strategies}, which we formalize mathematically in the next section.

\section{Strategies, Equilibria and Decision Problems}
In this section, we define \emph{strategies}, which enable agents to choose their actions in a systematic way. In addition to providing a mathematical definition, we describe how strategies are represented as transducers in both the explicit and circuit-based models. We then expand this definition to \emph{equilibria}, and then define the decision problems of \emph{verification} and \emph{realizability}. For ease of presentation, we define general concepts with respect to explicit systems, with the understanding that the circuit-based analog can be easily understood by substituting in the relevant components.

\begin{defi}[Strategy for Agent $i$]\label{detstrat}
A strategy for Agent $i$ is a function $\pi_i : D^* \rightarrow A_i$. Intuitively, this is a function that, given the observed history of the game (represented by an element of $D^*$), returns an action $a_i \in A_i$. 
\end{defi}

Note that by tracking the set of decisions $D$, agent strategies capture the full observable history of the game, including the sequence $V^*$ of vertices visited. A strategy of the form $V^* \rightarrow A_i$ can then be encoded as a strategy of the form $D^* \rightarrow A_i$, but the opposite implication does not hold in general. The set of all Agent $i$ strategies is $\Pi_i$.  This brings us to the definition of a \emph{strategy profile}.

\begin{defi}[Strategy Profile]\label{profiledef}
    A strategy profile $\pi \in \bigtimes_{i \in \Omega} \Pi_i$ is a $k$-tuple of strategies that assigns exactly one strategy to every Agent, and therefore represents a function of type $D^* \rightarrow D$. We also write $\pi = \langle \pi_0 \ldots \pi_{k-1} \rangle$ as a way to unfold a strategy profile in terms of its individual components.
\end{defi}

We express agent strategies as \emph{finite-state transducers}. 
Considering finite-state transducers as a representation for strategies is mathematically sufficient, as argued in works such as~\cite{RV22}.
Just as we have both explicit and circuit-based models of multi-agent systems, we have both explicit and circuit-based models of transducers.

\begin{defi}\label{deterministictransducermodel}
An \emph{explicit transducer} representing a deterministic Agent $i$ strategy in an explicit multi-agent system $\mathbb{G}$ is a 5-tuple $\langle S_i , s^i_0 , D, \gamma^i, O_i \rangle$, where
{\bf (1)} $S_i$ is a finite set of states, and $s^i_0 \in S_i$ is the initial state.
{\bf (2)}$D$, the set of decisions in $\mathbb{G}$, is the input alphabet.
{\bf (3)} $\gamma^i : S_i \times D \rightarrow S_i$ is the deterministic transition function, which is once again represented by an \emph{explicit table}.
{\bf (4)} $O_i : S_i \rightarrow A_i$ is the deterministic output function.
(This type of finite-state transducer is also called a \emph{Moore machine}~\cite{sipser2006}.)
\end{defi}

In this way, the explicit transducer represents a function of type $D^* \rightarrow A_i$, making it a suitable model for a strategy $\pi_i$.  Intuitively, the transducer starts in state $s^i_0$ and transitions upon reading elements in $V$ according to $\gamma^i$. Upon reaching a new state, it outputs an element of $A_i$ based on the output function $O_i$. A more formal description of how finite-state transducers work can be found in~\cite{sipser2006}.

In the same way we constructed the explicit transducer model, we can also introduce the circuit-based transducer model. This construction follows the same logic as the move from the explicit multi-agent system to the circuit-based multi-agent system, and follows the same typographical notation of using the ${\tt typewriter}$ font to signify implicitly represented sets and discrete functions represented by circuits.

\begin{defi}\label{circuittransducermodel} A \emph{circuit-based transducer} representing a deterministic Agent $i$ strategy in an circuit-based multi-agent system $\mathcal{G}$ is a 5-tuple $\langle \Salt_i , \s^i_0 , {\tt D}, \omega^i, \Oalt_i \rangle$, where
{\bf (1)} $\Salt_i $ is a set of state propositions for Agent $i$, inducing the set $2^{\Salt_i}$ of states. 
    {\bf (2)} $\s^i_0 \in 2^{\Salt_i}$ is the initial state. 
    {\bf (3)} The input alphabet is represented by the set ${\tt D}$ of decisions in $\mathcal{G}$, implicitly yielding the explicit input alphabet $2^{\tt D}$.
    {\bf (4)} $\omega^i: 2^{\Salt_i} \times 2^{\tt{D}} \rightarrow 2^{\Salt_i}$ is the transition function, which is once again represented by a \emph{combinational circuit}.
    {\bf (5)} $\Oalt_i : 2^{\Salt_i} \rightarrow 2^{\tt{A_i}}$ is the output function, which is once again represented by a \emph{combinational circuit}. Here ${\tt A_i}$ are Agent $i$ action variables, so $A_i=2^{\tt{A_i}}$.

\end{defi}

Note that a circuit-based transducer can be unfolded into an explicit transducer that is of at most exponential size via the same procedure outlined in Observation~\ref{conversion}. In order to simplify the presentation of this paper, the rest of the definitions in this section are given with regard to the explicit-state system notation $\mathbb{G} = \langle V, v_0, \Omega = \{0 \ldots k-1\}, A = \{A_0 \ldots A_{k-1} \}, G = \{ G_0 \ldots G_{k-1} \}, \tau \rangle$. These definitions can be easily extended to the circuit-based notation, since the functionality of the components in both systems is the same.

Agents progress the execution of the system through their choices of actions. These choices over time result in a \emph{decision sequence}, an element $p \in D^{\omega}$. A decision sequence induces a \emph{trace}, which represents an infinite execution of a system:
\begin{defi}\label{tracedef}
    Given an explicit system $\mathbb{G}$, a trace $t \in V^{\omega}$ of $\mathbb{G}$ is an infinite sequence of states in $\mathbb{G}$ that satisfies two properties: {\bf (1)} The first element of the trace, $t[0]$, is $v_0$ and {\bf (2)} for every two consecutive elements in the trace $t[i], t[i+1]$, $\exists d \in D. \tau(t[i], d ) = t[i+1]$. Given a system $\mathbb{G}$, a decision sequence $p \in D^{\omega}$ induces a unique trace according to {\bf (1)} and {\bf (2)}.

\end{defi}
\noindent
Note that every trace of $\mathbb{G}$ is induced by at least one decision sequence.

The definition for circuit-based systems follows is completely analogous.  Since both the strategies and systems considered in this paper are deterministic, a strategy profile corresponds to a unique trace in the system, which we denote as the \emph{primary trace}.

\begin{defi}[Primary Trace resulting from a Strategy Profile]\label{primary trace}
Consider a multi-agent system $\mathbb{G} = \langle V, v_0, \Omega = \{0 \ldots k-1\}, A = \{A_0 \ldots A_{k-1} \}, G = \{ G_0 \ldots G_{k-1} \}, \tau \rangle$.
Given a strategy profile $\pi$, the decision sequence of $\pi$ is the unique sequence of decisions $s \in D^{\omega}$ that satisfies $s[0] = \pi(\varepsilon)$ (with $\varepsilon$ the empty string) and $s[i] = \pi( s[0] \ldots s[i-1])$. The \emph{primary trace} is then the unique trace $t \in V^{\omega}$ resulting from this decision sequence, i.e. $t[0] = v_0$ and $t[i] = \tau(t[i-1], s[i-1])$ as in Definition~\ref{tracedef}. We denote this trace as $t_{\pi}$.
\end{defi}

 Given a trace $t \in V^\omega$, we define the \emph{winning set} $W_t=\{i\in\Omega~:~ t\models G_i \}$ to be the set of agents whose reachability goals are satisfied by $t$ -- an agent goal $G_i$ is satisfied on $t$ if $t$ contains at least one state $v \in G_i$.  The \emph{losing set} is then defined as $ \Omega / W_t$. The decision problems considered in this paper center around the \emph{Nash equilibrium}, a fundamental concept in the game theory literature.
 


 \begin{defi}[Nash Equilibrium]
Let $\mathbb{G}$ be multi-agent system and $\pi = \langle \pi_0 , \pi_1 \ldots  \pi_{k-1}\rangle$ be a strategy profile. We denote $W_{\pi}=W_{t_\pi}$. The profile $\pi$ is a \emph{Nash equilibrium} if for every $i \in \Omega / W_{\pi}$ we have that for every strategy profile of the form $\pi' = \langle\pi_0 , \pi_1 \ldots \pi'_i \ldots  \pi_{k-1}\rangle$, where $\pi^{'}_i \in \Pi_i$, it is the case that $i \in \Omega / W_{\pi '}$.
\end{defi} 
This definition applies the ``no profitable unilateral defection'' characteristic of the Nash equilibrium~\cite{Osborne1994} to the deterministic setting. Since we are working in a deterministic setting, each agent only has two possible payoffs, corresponding to whether they satisfy their reachability goal.  Thus, the key property of a Nash-Equilibrium profile is the set of agents that meet their goals.

\begin{defi}[$W$-NE]\label{WNE}~\cite{RV21}
    Let $\mathbb{G}$ be a multi-agent system,  $\pi = \langle \pi_0 , \pi_1 \ldots  \pi_{k-1}\rangle$ be a strategy profile, and $W \subseteq \Omega$ be a subset of agents. The strategy profile $\pi$ is a $W$-Nash Equilibrium ($W$-NE) iff it is a Nash equilibrium and $W_{\pi} = W$.
\end{defi}

This brings us to the decision problems of \emph{realizability} and \emph{verification} in the multi-agent equilibrium analysis setting.
\begin{problem*}[Realizability]
  ${}$

    {\bf Input:} A multi-agent system $\mathbb{G}$, a set $W \subseteq \Omega$ of agents.
    
    {\bf Output:} {\bf YES} if there exists a strategy profile $\pi$ that is a $W$-NE in $\mathbb{G}$, {\bf NO} if there is no such a strategy profile.
    
\end{problem*}

\begin{problem*}[Verification]
${}$

    {\bf Input:} A multi-agent system $\mathbb{G}$, a set $W \subseteq \Omega$ of agents, and a strategy profile $\pi = \langle \pi_0 \ldots \pi_{k-1} \rangle $ represented by a tuple of finite-state transducers.
    
    {\bf Output:} {\bf YES} if $\pi$ is a $W$-NE in $\mathbb{G}$, {\bf NO}  otherwise.
    
\end{problem*}

We develop the technical results of this paper in the following sections, which characterize the computational complexity of the realizability and verification decision problems for both the explicit and circuit-based models. In doing so, we pay special attention to how they compare and contrast through discussion sections, paying special attention to the ``model gap" issue introduced in Section~\ref{intro}.

A special type of two-agent system called a \emph{reachability game} plays a central role in our analysis. 

\subsection{Reachability Games}\label{reachabilitygamesection}

In this section, we introduce a special type of multi-agent system called a \emph{reachability game} that is important to our analysis. Unlike the general concurrent multi-agent systems introduced in Section~\ref{modelsection}, a reachability game has two agents and is \emph{turn-based}.

\begin{defi}(Reachability Game)\label{reachabilitygamedef}
    A reachability game $G$ is a 5-tuple $G = \langle V_0 , V_1, v_{in}, E, R \rangle$ with the following interpretations:
    \begin{enumerate}
        \item The set $V$ of states is partitioned into two disjoint sets, $V = V_0 \cup V_1$ and $V_0 \cap V_1 = \emptyset$. These two sets correspond to the two agents, $0$ and $1$. $v_{in} \in V$ is the initial state.
        \item $E \subseteq V \times V$ is a directed edge relation on the set of vertices.
        \item $R \subseteq V$ is the \emph{reachability goal}.
    \end{enumerate}
\end{defi}

 The play proceeds as follows. First, the game starts at $v_{in} \in V$. At states in $v \in V_0$, Agent $0$ gets to choose the successor state by selecting a state $v'$ such that $\langle v, v' \rangle \in E$. At states $v \in V_1$, Agent $1$ chooses the successor state in the same manner.  Because transitions in the system only consider an action choice from a single agent at a time, the game is called \emph{turn-based}. The play continues in this fashion ad infinitum. 
 The game is called a reachability game because the reachability set $R \subseteq V$ is associated with Agent $0$. If the play ever visits a state $r \in R$, then we say Agent $0$ wins the game. Otherwise, if the play never visits $R$, then Agent $1$ wins the game.

Given an arbitrary reachability game, it is either the case that Agent $0$ or Agent $1$ has a \emph{winning strategy}~\cite{McNaughton1993InfiniteGP}. A winning strategy for Agent $0$ is a map $V_0 \rightarrow V$ that chooses successor states from states in $V_0$ in such a way that for all possible successor choices in $V_1$, the set $R$ is eventually reached and Agent $0$'s reachability goal is satisfied. Likewise, a winning strategy for Agent $1$ is the dual of this concept, a mapping $V_1 \rightarrow V$ that ensures that every possible choice from Agent $0$ results in $R$ never being reached. Given a reachability game, determining which of the two agents has a winning strategy is PTIME-complete~\cite{sofiagamesnotes}.

It may also be the case that we are interested in an \emph{uninitialized} reachability game of the form $G = \langle V_0, V_1, E, R \rangle$. This is simply a reachability game without an initial state specification. Here, the algorithmic problem of interest is to partition the state space $V$ into two sets, $\Win_0(G)$ and $\Win_1(G)$. The idea behind this partition is that for states $v_0 \in \Win_0(G)$, Agent $0$ has a winning strategy in  $\langle V_0 , V_1,v_{in} = v_0, E, R \rangle$. The same holds for $\Win_1(G)$, as these are states where Agent $1$ has a winning strategy. It is known that $\Win_0(G) \cup \Win_1(G) = V$ and $\Win_0 \cap \Win_1(G) = \emptyset$~\cite{McNaughton1993InfiniteGP}. Given an uninitialized reachability game $G$, the sets $\Win_0(G)$ and $\Win_1(G)$ can be computed in PTIME.  

Reachability games are also sometimes called \emph{safety games} in the literature.  These games are completely equivalent; the only difference is that instead of Agent $0$ having a reachability goal of reaching $R$, we say that Agent $0$ has a ``safety" goal of staying within $V \setminus R$, i.e., never reaching $R$. These games are the same since they just swap the roles of Agent $0$ and Agent $1$, as it can be seen that Agent $1$ has a safety goal in a reachability game.

\section{The Complexity of Realizability}

\subsection{The Explicit Model}\label{explicitmodelrealizabilitysection}

\subsubsection{Upper Bound}\label{upperboundexplicitrealizability}

In~\cite{RV21}, it was shown that the existence of a $W$-NE in a multi-agent system in a specific setting was equivalent to a nonemptiness query in an appropriately constructed B\"uchi automaton. The setting used in~\cite{RV21}, however, used a modification of the \emph{iterated Boolean game} (iBG)~\cite{iBG} model and was therefore not as general as the ones considered in this paper -- in an iterated Boolean Game, the set of states and the set of decisions are necessarily required to be the same, whereas there is not such restriction on our model. The analysis on iBGs in~\cite{RV21} was then extended to a more general setting closely related to our own in~\cite{RBV23}. We generalize this construction to establish our upper bound.

The first step is to create a B\"uchi automaton $A_W$ that recognizes the set of ``valid" primary traces. Recall that the primary trace of a strategy profile $\pi$, as defined in Definition~\ref{primary trace}, is the unique trace $t \in V^{\omega}$ that results when the agents in $\mathbb{G}$ follow $\pi$ without deviating. The $W$-NE solution concept mandates that if $\pi$ is a $W$-NE, then the agents in $W$ and only the agents in $W$ achieve their reachability goal, and so this is the property we expect to hold of the primary trace. The intuitive idea behind the construction of $A_W$ is that a word $p \in D^{\omega}$ is accepted iff it corresponds to a trace that is a valid primary trace. This automaton $A_W$ is then further refined to create the automaton $A'_W$ that recognizes the set of valid primary traces from which agents not in $W$ cannot profitably deviate from. Therefore, non-emptiness in $A'_W$ is equivalent to the existence of a $W$-NE. The full details of the constructions and correctness of both $A_W$ and $A'_W$ are given in Section~\ref{awappendix}.

This gives us our algorithm to decide $W$-NE realizability, as Section~\ref{awappendix} proves that non-emptiness of $A'_W$ is logically equivalent to the question of $W$-NE realizability. Algorithmically, there are two stages to test $A'_W$ for non-emptiness. The first is solving a series of $|\Omega \setminus W|$ reachability games with $V + |V \times D|$ states in order to identify the ``dangerous'' states that must be pruned to create $A'_W$ from $A_W$ (see Section~\ref{awappendix}). While it might seem that this state space is exponential due to the presence of $D$, note that we are working with the \emph{explicit} model, where the transition function $\tau$ is represented by a table with exactly $|V| \cdot |D|$ rows. This means that the reachability games actually have \emph{polynomial} size with respect to the input, and they can therefore be solved in polynomial time~\cite{McNaughton1993InfiniteGP}. The next component is to test $A'_W$ for nonemptiness. Recall that in order to give each agent agency, we require that $\forall i \in \Omega. |A_i| \geq 2$, meaning that $|D| \geq 2^k$. Therefore, once again, the state space of $A'_W$ is polynomial in the size of the input. A Büchi automaton can be tested for nonemptiness in NLOGSPACE~\cite{VW94}. Therefore, the PTIME complexity of solving the reachability games ultimately dominates the NLOGSPACE complexity of testing $A'_W$ for nonemptiness, and our final result is that $W$-NE realizability can be decided in PTIME.


\begin{thm}\label{explicitrealizabilityupperboundtheorem}
    For the explicit model, the $W$-NE realizability problem is in PTIME.
\end{thm}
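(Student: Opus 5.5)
The plan is to reduce $W$-NE realizability to a graph problem on a structure of size polynomial in the explicit input, exploiting the fact that $\tau$ is given as a table with exactly $|V|\cdot|D|$ rows.

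\textbf{Step 1: punishable states.} For each losing agent $j \in \Omega\setminus W$, I will build an uninitialized turn-based reachability game $H_j$ as in Definition~\ref{reachabilitygamedef}.
\begin{itemize}
\item Agent $j$ (the reachability player) controls the states $v \in V$.
\item The coalition $\Omega\setminus\{j\}$ controls the intermediate states $(v,a_j)$ with $a_j \in A_j$.
\item From $v$, Agent $j$ moves to $(v,a_j)$. From $(v,a_j)$, the coalition chooses the remaining actions $d_{-j}$ and moves to $\tau(v,(a_j,d_{-j}))$.
\item The target set is $G_j$.
\end{itemize}
Having Agent $j$ commit first is the correct modelling: in the concurrent game, a deviator who fails to reach $G_j$ must fail against punishers who know the profile, so Agent $j$'s strategy is effectively fixed in advance.

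Each $H_j$ has at most $|V| + |V|\cdot|D|$ states and at most $|V|\cdot|D|$ edges. This is polynomial in the table size, so $\Win_0(H_j)$ is computable in PTIME~\cite{McNaughton1993InfiniteGP}. Call $v$ \emph{$j$-safe} if $v \in \Win_1(H_j)$, and let $\mathrm{Safe}=\bigcap_{j\notin W}\{v : v \text{ is } j\text{-safe}\}$.

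\textbf{Step 2: characterize realizability.} I will show that a $W$-NE exists iff there is a path $v_0 d_0 v_1 d_1 \ldots$ in the transition graph with the following properties:
\begin{enumerate}
\item every $v_i$ lies in $\mathrm{Safe}$, and in particular none lies in $G_j$ for any $j\notin W$;
\item for every $i$ and every $j\notin W$, every one-step deviation $\tau(v_i,(a'_j,(d_i)_{-j}))$ lands in a $j$-safe state;
\item for every $i\in W$, the path visits $G_i$.
\end{enumerate}
This mirrors the automaton $A'_W$ obtained by pruning $A_W$.

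For soundness, the profile follows the path. The first time some $j$ deviates, which is observable because strategies read $D^*$, all other agents switch to a winning punishing strategy of Agent~1 in $H_j$ from the deviation successor. Since goals are reachability goals and no state before the deviation is in $G_j$, Agent $j$ never wins.

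For completeness, take the primary trace of a $W$-NE. If some reachable deviation point for $j$ were in $\Win_0(H_j)$, Agent $j$ would have a profitable deviation. Determinacy of reachability games~\cite{McNaughton1993InfiniteGP} turns the concurrent deviation argument into exactly this turn-based game.

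\textbf{Step 3: decide the path condition.} Prune the graph to safe states and to decisions satisfying condition~2. What remains is Büchi nonemptiness over a product with the set of $W$-goals already visited. This is the main obstacle, since tracking that set naively costs $2^{|W|}$.

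I would first try to avoid the blowup. The path can be taken to be a lasso, so it suffices to find a reachable strongly connected component of the pruned graph from which all $W$-goals are reachable. The cleaner approach is a sequence of reachability tests inside the pruned graph: any order of visiting the goals works, because I can route the path through the goals within one SCC after reaching it. The one-step deviations are irrelevant once the pruning is done.

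If a fully general ordering is needed, I would instead check, for each SCC reachable from $v_0$ in the pruned graph, that each $G_i$ with $i \in W$ is reachable from $v_0$ on a path through that SCC. This is polynomially many graph searches. As the paper notes, $|D|\ge 2^k$, so even an $\mathcal{O}(2^{|W|})$ product remains polynomial in the table size. That fallback resolves the obstacle outright, and the overall procedure runs in PTIME.
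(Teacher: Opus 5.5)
\textbf{Gap: the reachability game in Step~1 has the moves in the wrong order, and this breaks soundness.} In $H_j$, Agent $j$ commits to $a_j$ at $v$, and the coalition then chooses $d_{-j}$ \emph{knowing} $a_j$. In the concurrent system this information is not available. The punishers' actions in a round are $\pi_i(h)$, where $h\in D^*$ is the history of the \emph{previous} rounds, so they never see $j$'s current action. Conversely, $\pi_{-j}$ is fixed as part of the profile before the deviation $\pi'_j$ is quantified. Hence $\pi'_j$ can compute $\pi_{-j}(h)$ and effectively moves second in every round. Your justification (``Agent $j$'s strategy is effectively fixed in advance'') inverts this quantifier order.

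As a result, $\Win_1(H_j)$ over-approximates the states from which $j$ can actually be punished. The ``winning punishing strategy of Agent~1 in $H_j$'' used in your soundness argument cannot be implemented by strategies $D^*\to A_i$. Completeness is unaffected, since your game is only harder for $j$, so your procedure errs by answering YES on NO-instances. A concrete instance:
\begin{itemize}
\item $A_0=A_1=\{0,1\}$, states $v_0,u,g$ with $g$ a sink.
\item $\tau(v_0,(0,b))=v_0$ and $\tau(v_0,(1,b))=u$.
\item $\tau(u,(a,b))=g$ if $a=b$, and $u$ otherwise.
\item $G_0=\{g\}$, $G_1=\emptyset$, $W=\emptyset$.
\end{itemize}
In $H_0$ the coalition answers $b\neq a$ at $u$, so $v_0$ and $u$ are $0$-safe. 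The path $v_0\,(0,0)\,v_0\,(0,0)\cdots$ then meets your conditions 1--3. Yet no $\emptyset$-NE exists: against any $\pi_1$, Agent~$0$ plays $1$ at $v_0$ and then $\pi_1(h)$ at $u$, reaching $g$.

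\textbf{Fix.} Let the coalition move first: from $v$ it announces $d$, then $j$ replaces its own component and play moves to $\tau(v,d[a_j])$. This is exactly the paper's game $\mathbb{G}_j=\langle V\times D, V, E, G_j\rangle$. With it, your condition~2 becomes $\langle v_i,d_i\rangle\in\Win_1(\mathbb{G}_j)$, which is precisely the pruning that turns $A_W$ into $A'_W$, and your soundness and completeness sketches then go through essentially as written.

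\textbf{Step~3.} Your SCC shortcuts are not valid as stated. A single path must visit \emph{all} $W$-goals, and reachability of each goal individually from, or through, a common SCC does not yield such a path; two goals in different dead-end branches are a counterexample. Drop them and rely on the product with $2^{W}$. That product is polynomial because $2^k\le|D|$, and it is the paper's $V\times 2^{\Omega}$ automaton.
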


This PTIME upper bound merits further comparison with~\cite{BBMU15}, which we make in Section~\ref{realizabilitydiscussion}. In order to complete the technical analysis of the $W$-NE realizability problem, we now provide a matching lower bound. 

\subsubsection{Lower Bound}\label{lowerboundexplicitrealizability}

In order to prove that $W$-NE realizability for explicit multi-agent systems is PTIME-hard, we reduce from the PTIME-hard problem of solving an initialized reachability game $\mathcal{R}$ (see Section~\ref{reachabilitygamesection}). We reduce this PTIME-complete problem to the existence of an $\emptyset$-NE in an appropriately constructed explicit multi-agent system $\mathbb{G}_\mathcal{R}$.  

\begin{lem}\label{explicitrealizabilitylowerboundlemma}
    An $\emptyset$-NE exists in $\mathbb{G}_\mathcal{R}$ iff Agent 1 has a winning strategy in $\mathcal{R}$.
\end{lem}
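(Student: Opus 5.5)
The construction of $\mathbb{G}_\mathcal{R}$ is not shown before the lemma, so I will fix one and argue for it. Let $\mathcal{R} = \langle V_0, V_1, v_{in}, E, R\rangle$.

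\textbf{Construction.} The system $\mathbb{G}_\mathcal{R}$ has two agents, $0$ and $1$, with state set $V$ and initial state $v_{in}$. Make every vertex have at least one successor: if some $v$ has none, add a self-loop at $v$, or a sink in $R$ or outside $R$ according to the intended winner of dead ends. Each agent's action set is $A_i = V$; if $|V|=1$, duplicate an action so that $|A_i|\geq 2$. Define
\[
\tau(v,(a_0,a_1)) = a_j \quad \text{if } v\in V_j \text{ and } \langle v,a_j\rangle\in E,
\]
and otherwise $\tau(v,(a_0,a_1)) = v_{\mathrm{def}}(v)$, a fixed $E$-successor of $v$. Set $G_0 = R$ and $G_1 = \emptyset$.

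The table has $|V|\cdot|V|^2$ rows, which is polynomial because there are only two agents; this is exactly what the explicit model requires. Computing it is even logspace.

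\textbf{Correspondence.} At $v\in V_j$, Agent $j$ can move to any $E$-successor, and the other agent's action is irrelevant. Hence traces of $\mathbb{G}_\mathcal{R}$ are exactly plays of $\mathcal{R}$. Strategies $D^*\to A_i$ induce history-dependent strategies in $\mathcal{R}$, and conversely every strategy in $\mathcal{R}$ lifts to $\mathbb{G}_\mathcal{R}$. The only care needed is that in $\mathbb{G}_\mathcal{R}$ an agent also observes its opponent's irrelevant actions. This is harmless, because reachability determinacy holds against arbitrary history-dependent opponents, and positional winning strategies exist~\cite{McNaughton1993InfiniteGP}.

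\textbf{($\Leftarrow$)} Suppose Agent $1$ has a winning strategy $\sigma_1$ in $\mathcal{R}$. Let Agent $1$ play $\sigma_1$ lifted to $\mathbb{G}_\mathcal{R}$, and let Agent $0$ play any fixed strategy. The primary trace avoids $R$, and $G_1=\emptyset$ is never satisfied, so $W_\pi=\emptyset$. Agent $1$ has an empty goal and cannot profitably deviate. For Agent $0$, any deviation $\pi'_0$ yields a play of $\mathcal{R}$ consistent with $\sigma_1$, which never reaches $R$. So $\pi$ is an $\emptyset$-NE.

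\textbf{($\Rightarrow$)} Argue by contraposition. Suppose Agent $1$ has no winning strategy. By determinacy, Agent $0$ has a winning strategy $\sigma_0$. Take any profile $\pi$ with $W_\pi=\emptyset$; if none exists we are done. Agent $0$ is losing under $\pi$. Let Agent $0$ deviate to the lift of $\sigma_0$. Against $\pi_1$, viewed as a history-dependent Agent $1$ strategy in $\mathcal{R}$, this reaches $R$, so the deviation is profitable. Hence no $\emptyset$-NE exists.

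\textbf{Main obstacle.} The delicate step is the simulation argument. Strategies in $\mathbb{G}_\mathcal{R}$ observe full decision histories, including actions that have no effect, and illegal actions fall back to the default successor. I must check two things:
\begin{itemize}
\item every strategy in $\mathbb{G}_\mathcal{R}$ induces a well-defined strategy in $\mathcal{R}$ with the same outcome, by fixing the irrelevant component and mapping illegal moves to $v_{\mathrm{def}}$;
\item a winning strategy of $\mathcal{R}$ stays winning against these richer opponents.
\end{itemize}
I would handle both by noting that, once both strategies are fixed, the play is deterministic. I then replay it in $\mathcal{R}$ with the opponent's strategy defined along that single play, which suffices because winning strategies in $\mathcal{R}$ win against every opponent strategy.
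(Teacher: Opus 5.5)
Your proof is correct. Its skeleton matches the paper's: encode $\mathcal{R}$ as a two-agent explicit system with action sets $V$, give Agent 1 the empty goal so that only Agent 0's deviations matter, and transfer strategies between the two games. There are two genuine differences.

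\textbf{The construction.} The paper sends illegal moves to penalty sinks $\hat{0}$ and $\hat{1}$ and sets $G_0 = R \cup \{\hat{1}\}$. You instead redirect illegal moves to a default legal successor. Your choice makes the traces of $\mathbb{G}_\mathcal{R}$ literally the plays of $\mathcal{R}$, so no reasoning about illegal moves is needed. The price is your separate dead-end preprocessing, which the paper's sinks handle automatically: a stuck Agent 0 lands in $\hat{0}$ and a stuck Agent 1 lands in $\hat{1}$, which encodes the convention that a player who cannot move loses.

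\textbf{The ($\Rightarrow$) direction.} The paper argues directly that the Agent 1 component $\pi_1$ of an $\emptyset$-NE is itself winning in $\mathcal{R}$. This requires reading a strategy over $D^*$ as one over vertex histories. The paper dismisses this as ``a slight abuse of notation'' and only addresses $\pi_1$'s outputs at $V_0$ states. It does not address that $\pi_1$ may depend on Agent 0's irrelevant actions. Nor does it address that $\pi_1$ may play illegally, which is harmless there only because $\hat{1} \in G_0$ would then give Agent 0 a profitable deviation.

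Your contrapositive uses determinacy to obtain a winning strategy $\sigma_0$ for Agent 0. It then only has to project one deterministic play onto $\mathcal{R}$ and check that the play is consistent with $\sigma_0$. This sidesteps exactly the point the paper glosses over. The paper's route, once made rigorous, gives the slightly stronger fact that the Agent 1 component of every $\emptyset$-NE is itself a winning strategy in $\mathcal{R}$.

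Strictly, you prove the lemma for your own $\mathbb{G}_\mathcal{R}$ rather than the paper's, but either construction serves the PTIME-hardness reduction equally well.
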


The details of the construction of $\mathbb{G}_\mathcal{R}$ given $\mathcal{R}$ and the correctness of the reduction are given in Section~\ref{explicitlowerboundrealizabilityappendix}. Using this reduction, we can prove PTIME-hardness and ultimately conclude:

\begin{thm}\label{explicitrealizabilitycomplete}
    For the explicit model, the $W$-NE realizability problem is PTIME-complete.
\end{thm}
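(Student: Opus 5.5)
Theorem~\ref{explicitrealizabilitycomplete} combines membership, which is Theorem~\ref{explicitrealizabilityupperboundtheorem}, with PTIME-hardness. For hardness we use Lemma~\ref{explicitrealizabilitylowerboundlemma}: the map $\mathcal{R} \mapsto (\mathbb{G}_\mathcal{R}, \emptyset)$ is a reduction from deciding the winner of an initialized reachability game, which is PTIME-complete (Section~\ref{reachabilitygamesection}), to $W$-NE realizability. Since Agent $0$ or Agent $1$ always has a winning strategy, deciding whether Agent $1$ wins is also PTIME-hard. It remains to check that the reduction is a log-space many-one reduction, the standard notion for PTIME-hardness.

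For this step I fix the intended shape of $\mathbb{G}_\mathcal{R}$. It has two agents, so $k=2$, and $|D|$ is polynomial. The states are $V_0 \cup V_1$ of $\mathcal{R}$, plus possibly a sink. Each agent's actions are the vertices of $\mathcal{R}$, or the indices of successors up to the maximal out-degree. At a state $v \in V_j$, the transition $\tau$ follows Agent $j$'s chosen successor if it is a legal edge, and otherwise goes to a default successor; Agent $1$'s choice does not matter at $V_0$-states, and vice versa. Agent $0$'s goal is $G_0 = R$. Agent $1$'s goal $G_1$ is chosen so that Agent $1$ can never profitably deviate, for instance $G_1=\emptyset$ or an unreachable state. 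The table $\tau$ then has $|V|\cdot|D| = O(|V|^3)$ rows, and each row is computed by local inspection of $E$, checking whether the chosen successor is an edge. So the whole table is output with logarithmically many counters.

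Correctness then follows from Lemma~\ref{explicitrealizabilitylowerboundlemma}. If Agent $1$ wins, we pair her winning strategy with any strategy for Agent $0$; no deviation of Agent $0$ reaches $R$, so this is an $\emptyset$-NE. Conversely, an $\emptyset$-NE yields an Agent $1$ strategy against which no Agent $0$ deviation reaches $R$, and by determinacy Agent $0$ does not win. The main obstacle is the size issue this paper highlights. We must make sure that the number of agents is constant, here $2$, so that $|D|$, and therefore the explicit table, stays polynomial and the reduction is genuinely in log-space. This is exactly why the hardness uses the two-agent game rather than a many-agent encoding.
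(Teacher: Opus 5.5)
Your proposal is correct and follows the paper's proof. Membership is Theorem~\ref{explicitrealizabilityupperboundtheorem}, and hardness is the log-space two-agent $\emptyset$-NE encoding of Lemma~\ref{explicitrealizabilitylowerboundlemma}; the only difference is cosmetic, since you send illegal moves to a default legal successor with $G_0=R$, while the paper uses penalty sinks $\hat{0},\hat{1}$ with $G_0=R\cup\{\hat{1}\}$, and both work. (One phrasing fix in your converse: simulating a putative Agent~$0$ winning strategy of $\mathcal{R}$ as a deviation is what shows Agent~$0$ does not win, and determinacy is then what gives that Agent~$1$ wins.)
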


\subsection{The Circuit-Based Model}

\subsubsection{Upper Bound}\label{upperboundcircuitrealizability}

As demonstrated in Observation~\ref{conversion}, the circuit-based model represents an encoding of a multi-agent system that is, at best, exponentially more succinct than the explicit model. 

This gives us an EXPTIME upper bound on the $W$-NE realizability problem by following the exact same algorithms and constructions outlined in Section~\ref{awappendix} on the exponentially-sized unfolding of the circuit-based system described in Observation~\ref{conversion}.   The B\"uchi automaton construction in \ref{awappendix}, for example, had a state space $V \times 2^{\Omega}$. This translates into a state space of $2^{\tt V} \times 2^{\Omega}$ in the circuit-based model, noting that $\Omega$ has the same representation in both models. Since this new construction is exponential in the size of the input, it can be tested for nonemptiness in NPSPACE=PSPACE using the same NLOGSPACE result for B\"uchi automaton nonemptiness~\cite{VW94} utilized in Section~\ref{upperboundexplicitrealizability}.  Likewise, the reachability game constructions that had a state space of size $O(|V \times D|)$ in Section~\ref{awappendix} translate into reachability games of size $O(|2^{\tt V} \times 2^{\tt{D}}|)$, noting that $2^{\tt{D}}$ is an explicit representation of the cross-product of the explicit representation of the action sets in the circuit-based model. This gives us exponentially-sized reachability games which can be solved in EXPTIME. As with the explicit model, the complexity of the reachability game dominates the complexity of the non-emptiness queries, placing the problem in EXPTIME.

\begin{thm}
    For the circuit-based model, the $W$-NE realizability problem is in EXPTIME. 
\end{thm}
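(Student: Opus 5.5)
Given a circuit-based system $\mathcal{G}$ and a set $W \subseteq \Omega$, we first unfold $\mathcal{G}$ into an explicit system $\mathbb{G}$ via Observation~\ref{conversion}. The state set is $2^{\tt V}$ and each action set is $2^{\tt A_i}$, so the decision set is $2^{\tt D}$. We tabulate $\tau$ by evaluating the circuit $\varphi$ on each of the $2^{|{\tt V}|}\cdot 2^{|{\tt D}|}$ input pairs. Each goal set $G_i$ is tabulated by evaluating ${\tt G_i}$ on every state. Every circuit evaluation takes time polynomial in the circuit size, so the whole unfolding takes time $2^{O(n)}\cdot \mathrm{poly}(n)$, where $n$ is the input size. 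In particular, $|V|$, $|D|$ and the table for $\tau$ all have size $2^{O(n)}$.

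Next we run the PTIME procedure of Theorem~\ref{explicitrealizabilityupperboundtheorem} on $(\mathbb{G}, W)$. First we build the automaton $A_W$, whose state space is $V\times 2^{\Omega}$, of size $2^{|{\tt V}|}\cdot 2^{k}$, which is at most exponential in $n$ since $k\le n$. We then solve the $|\Omega\setminus W|$ reachability games, each on $|V|+|V\times D|$ states, to find the dangerous states, and prune them to obtain $A'_W$. Finally we test $A'_W$ for nonemptiness. Each of these steps takes time polynomial in the size of $\mathbb{G}$. A polynomial in a $2^{O(n)}$ quantity is still $2^{O(n)}$, so the total running time is exponential in $n$. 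Correctness carries over directly from the equivalence established in Section~\ref{awappendix}, because $\mathcal{G}$ and its unfolding have identical traces, goals and strategy spaces. The transducer representation plays no role in the existence question, since $\Pi_i$ is defined abstractly.

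The one point to check carefully is that the unfolding is exponential in $n$, and not in some larger quantity. The explicit size $|V|\cdot|D|$ is $2^{|{\tt V}|+|{\tt D}|}$. Every variable in ${\tt V}$ and ${\tt D}$ is an input or output of the circuit $\varphi$, so $|{\tt V}|+|{\tt D}|\le n$. The $2^{\Omega}$ component adds only a further factor of $2^k \le 2^n$. With this bound in place, composing the exponential-time unfolding with the polynomial-time algorithm gives an EXPTIME procedure. One can also remark, as the paper does, that nonemptiness can be decided in PSPACE, but it is the exponential-size reachability games that dominate the running time.
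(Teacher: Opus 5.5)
Your proposal is correct and follows the paper's own argument. Like the paper, you unfold the circuit-based system via Observation~\ref{conversion}, run the explicit $A_W$/$A'_W$ procedure on the result, and note that the exponential-size reachability games dominate the PSPACE nonemptiness test, giving EXPTIME; your explicit bounds $|{\tt V}|+|{\tt D}|\le n$ and $2^k\le 2^n$ make precise the paper's briefer claim that the unfolding is only singly exponential.
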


\subsubsection{Lower Bound}\label{lowerboundcircuitrealizability}

In order to show EXPTIME-hardness, we reduce from the problem of alternating PSPACE Turing Machine acceptance.
\begin{problem*}    
Given an alternating Turing Machine $M$ and a natural number $n \in \mathbb{N}$ written in unary, does $M$ accept the empty tape using at most $n$ cells? 
\end{problem*}

Given an alternating Turing machine $M$, we can construct a circuit-based multi-agent system $\mathcal{G}$ such that:

\begin{lem}\label{circuitbasedrealizabilitylowerboundlemma}
    $M$ accepts the empty tape iff there is no $\emptyset$-NE in $\mathcal{G}$.
\end{lem}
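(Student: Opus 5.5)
The plan mirrors the explicit-model reduction behind Lemma~\ref{explicitrealizabilitylowerboundlemma}. Acceptance of an alternating linear-space machine is a reachability game on configurations: existential configurations belong to one player, universal ones to the other, and the goal is the accepting configurations. The only difference is that this game is now exponentially large but succinctly described. So I would first turn $M$ and $n$ into a \emph{succinct} reachability game $\mathcal{R}_M$, and then apply the explicit gadget symbolically, with circuits replacing tables.

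\emph{Encoding $\mathcal{R}_M$ by circuits.}
\begin{enumerate}
\item Let ${\tt V}$ contain $O(n\log|\Gamma|+\log n+\log|Q|)$ variables encoding a configuration of $M$ on $n$ cells: tape contents, head position and control state. Add a few control bits (for example, a ``sink'' flag).
\item Set ${\tt v_0}$ to encode the initial configuration on the empty tape.
\item Use $k=2$ agents. Agent $1$ plays the role of the existential (``prover'') player and Agent $0$ the universal player. Each agent's action variables ${\tt A_i}$ encode an index into the transition relation of $M$; the fan-out is bounded by a constant depending on $M$, so $O(1)$ bits suffice.
\item The circuit $\varphi$ reads the current configuration, checks whether its state is existential or universal, and applies the move chosen by the owner of that configuration. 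The other agent's action is ignored. An illegal or out-of-range choice is redirected to a fixed default legal move, or the configuration is left unchanged.
\item Accepting configurations are made absorbing. Rejecting halting configurations also self-loop.
\end{enumerate}
Each piece is a local check on the head cell, so $\varphi$ has size polynomial in $n$ and $|M|$. Standard facts then give that $M$ accepts iff the existential player has a winning strategy in the succinct reachability game from ${\tt v_0}$.

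\emph{Goals and the equilibrium correspondence.} Following the role assignment in Lemma~\ref{explicitrealizabilitylowerboundlemma} (an $\emptyset$-NE exists iff Agent~1, the safety player, wins), I would give the reachability goal ``reach an accepting configuration'' to the existential player and an unsatisfiable (or suitably gadgeted) goal to the other. I would then reproduce the explicit proof. If the existential player has a winning strategy, then under any profile where she never reaches acceptance she deviates profitably, so no $\emptyset$-NE exists. If the universal player has a winning (safety) strategy, then the profile in which he plays it and she plays arbitrarily is an $\emptyset$-NE: no unilateral deviation by her reaches acceptance, and the other agent's goal cannot be reached at all. This gives exactly the stated equivalence: $M$ accepts iff there is no $\emptyset$-NE in $\mathcal{G}$. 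If the explicit gadget instead uses extra states, for example an initial ``opt-out'' move, I would copy that gadget into a control bit of ${\tt V}$; it costs only constant overhead.

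\emph{Main obstacle.} The step I expect to need the most care is the mismatch between the turn-based game and concurrent moves, together with the strategy model. Strategies are functions $D^*\to A_i$, and deviations range over all such functions. I must check that memoryless winning strategies in the exponential game still give deviations, or blocking strategies, valid against arbitrary history-dependent opponents; determinacy of reachability games~\cite{McNaughton1993InfiniteGP} handles this. I must also make sure the non-owner's action can never influence the transition. Finally, I must verify that $\varphi$ and each ${\tt G_i}$ are genuinely polynomial-size circuits and that the whole construction is computable in polynomial time. Since alternating PSPACE equals EXPTIME, this yields EXPTIME-hardness, and hence EXPTIME-completeness of $W$-NE realizability for the circuit-based model, since EXPTIME is closed under complement.
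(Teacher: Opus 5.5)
Your plan is essentially the paper's proof. ${\tt V}$ encodes configurations of $M$ on $n$ cells, and $\varphi$ applies only the move of the player who owns the current configuration. The existential player gets the goal ``reach an accepting configuration'' and the other agent gets the empty goal. The argument of Lemma~\ref{explicitrealizabilitylowerboundlemma}, plus determinacy, then gives the equivalence. Your two equilibrium directions are correct. Encoding actions as indices into $\Delta$ rather than as triples in $R\times\Gamma\times\{\rightarrow,\leftarrow\}$ makes no difference.

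One of the two options in your item~4 fails, however. Suppose an illegal choice leaves the configuration unchanged. Then at any universal configuration where an illegal action is available, the universal player can choose it forever. He stalls there and blocks acceptance even when every successor leads to acceptance, so the equivalence breaks. You should commit to redirecting to a fixed legal move. Alternatively, use the paper's asymmetric penalties: an illegal existential move goes to a rejecting sink and an illegal universal move goes to an accepting configuration, as with $\hat{0}$ and $\hat{1}$.

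You should also say explicitly that a transition moving the head off the $n$ cells goes to a rejecting sink for either player, as in the paper. It must not be redirected: that branch counts as rejecting, and the universal player must be able to take it.

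Finally, your item~3 makes Agent~1 existential. This contradicts your later appeal to Lemma~\ref{explicitrealizabilitylowerboundlemma}, where Agent~1 is the safety player. Swap the labels so that Agent~0 owns ${\tt G_0}$, as in $\mathcal{G}$.
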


Details about alternating Turing machines along with the full reduction and proof of Lemma~\ref{circuitbasedrealizabilitylowerboundlemma} can be found in Section~\ref{circuitbasedrealizabilitylowerboundappendix}. We note that this proof has a very similar structure to the one in Section~\ref{explicitlowerboundrealizabilityappendix}, as it essentially establishes that solving succinctly-represented reachability games is EXPTIME-hard and uses the same $\emptyset$-NE encoding of a reachability game as a $W$-NE. This gives us the validity of our reduction and our completeness result.

\begin{thm}\label{circuitrealizabiltycomplete}
    For the circuit-based model, the $W$-NE realizability problem is EXPTIME-complete. 
\end{thm}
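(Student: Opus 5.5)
The theorem follows by combining the EXPTIME upper bound for the circuit-based model, established in Section~\ref{upperboundcircuitrealizability}, with an EXPTIME-hardness argument built on Lemma~\ref{circuitbasedrealizabilitylowerboundlemma}. For membership I take the upper-bound theorem as given: unfold the circuit-based system via Observation~\ref{conversion}, then run the automaton construction of Section~\ref{awappendix}. This requires solving exponentially many-state reachability games and checking exponential-size B\"uchi nonemptiness, all within EXPTIME.

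For hardness I would reduce from the complement of acceptance for alternating linear-space (PSPACE) Turing machines. Since APSPACE $=$ EXPTIME, the bounded acceptance problem stated above (empty tape, $n$ in unary) is EXPTIME-complete, and EXPTIME is closed under complement.

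\textbf{Reduction.} Given $(M,n)$, build the circuit-based system $\mathcal{G}$ of Lemma~\ref{circuitbasedrealizabilitylowerboundlemma} and output the instance $(\mathcal{G}, W=\emptyset)$. The variables $\tt V$ encode a configuration: $O(n\log|\Gamma|)$ bits for the tape contents, $O(\log n)$ bits for the head position, $O(\log|Q|)$ bits for the control state, plus a few bookkeeping bits. Two agents play the roles of the existential and universal players of $M$, exactly as Agents $0$ and $1$ do in a reachability game. Their action variables select a transition of $M$.

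\textbf{Circuits and size.} The transition circuit $\varphi$ reads the current state, determines whether it is existential or universal, and applies the move chosen by the corresponding agent, ignoring the other agent's choice. It realizes local updates of the tape cell, head, and state through a polynomial number of gates: each tape cell's next value depends only on whether the head is on it and on the chosen move. The goal circuits test for an accepting state. All circuits are polynomial in $n$ and $|M|$ and constructible in logspace/polynomial time.

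\textbf{Correctness.} Lemma~\ref{circuitbasedrealizabilitylowerboundlemma} gives: $M$ accepts iff there is no $\emptyset$-NE in $\mathcal{G}$. Hence the reduction maps the co-acceptance problem to $\emptyset$-NE realizability, making realizability co-EXPTIME-hard, which equals EXPTIME-hard.

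\textbf{Main obstacle.} The hard step is the correctness of the lemma. It must transfer the structure of the explicit reduction in Lemma~\ref{explicitrealizabilitylowerboundlemma} to the succinct setting. There are two points to establish:
\begin{enumerate}
\item A winning strategy of the existential player corresponds to a profitable deviation that destroys every candidate $\emptyset$-NE.
\item Conversely, a universal winning strategy supplies a profile in which no agent reaches its goal and no unilateral deviation helps.
\end{enumerate}
Two technical details need care. First, rejecting or halting configurations, and runs exceeding $n$ cells, must be made into self-loops that are never goal states. Second, the determinacy of the (exponential) reachability game must be invoked on the unfolded system.
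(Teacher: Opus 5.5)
Your proposal is correct and follows essentially the paper's route. Membership comes from unfolding the circuits (Observation~\ref{conversion}) and running the $A'_W$ construction on exponential-size games and automata. Hardness comes from mapping $(M,n)$ to $(\mathcal{G},\emptyset)$ via the two-agent reachability-game encoding of Lemma~\ref{circuitbasedrealizabilitylowerboundlemma}, with closure of EXPTIME under complement (left implicit in the paper) turning ``$M$ accepts iff no $\emptyset$-NE'' into EXPTIME-hardness.

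State explicitly, as the paper does, that Agent~1's goal circuit is the constant $0$ rather than an acceptance test; your phrase ``goal circuits test for an accepting state'' blurs this. Otherwise Agent~1 could profitably deviate toward acceptance at a universal configuration, and the direction ``$M$ rejects $\Rightarrow$ an $\emptyset$-NE exists'' would fail.
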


\subsection{Discussion}\label{realizabilitydiscussion}
On an intuitive level, comparing the PTIME-complete result for the explicit model in Theorem~\ref{explicitrealizabilitycomplete} to the EXPTIME-complete result for the circuit-based model in Theorem~\ref{circuitrealizabiltycomplete} results in the ``standard" exponential jump that one would expect when making a model exponentially more succinct (for example, see~\cite{succinctgraphs}). There are, however, two very important points for discussion.

The first is the aforementioned NP-complete result of~\cite{BBMU15}, which should be compared to the PTIME-complete result of Theorem~\ref{explicitrealizabilitycomplete}. In order to compare the two results, it must be noted that while the problems of ``constrained Nash equilibrium existence" and ``W-NE realizability" are equivalent, the models considered in this paper and~\cite{BBMU15} are different -- in~\cite{BBMU15}, strategies are of the form $V^* \rightarrow A_i$, not $D^* \rightarrow A_i$. The system used for the NP-hard lower bound in~\cite{BBMU15}, however, is \emph{turn-based}, which removes this distinction and, furthermore, creates what we call a ``model-gap".  

We now explain the model-gap issue by introducing the two models that have a ``gap" between them. In~\cite{BBMU15}, multi-agent systems are first presented in a general manner, where transition tables resemble the explicit tables of Definition~\ref{explicitsystemdef}.  It is then noted that these transition tables are unavoidably large in the worst case (Remark 2.2), an observation we also make when noting that the transition table of an explicit system is exponential in the number of agents. The problem of ``constrained Nash equilibrium existence" is then shown to be in NP. In fact, we have proven a PTIME upper bound in our model. This explicit table model is the first model.

The second model is introduced in the hardness proof, presented in Section 5.1.3 of ~\cite{BBMU15}. There, NP-hardness is shown for a system in which one agent decides every action -- although the other agents have their own goals and therefore add complexity to the system, they do not influence the transitions of the system. This means that the exponential blow-up incurred by considering the joint action space is completely avoided, making a polynomial-time reduction possible. But although it seems natural that a hardness result for a system in which only one agent acts should naturally carry over to the intuitively harder problem of considering a system where multiple agents act concurrently, this is not the case from a complexity-theoretic standpoint. Allowing only one agent to act is a special restriction on the system that allows for a more succinct representation of the transition table.  Therefore, from a complexity-theoretic viewpoint, these are two completely different problems, as the input is specified in two different ways. The gap between the general model for the upper bound and the restricted model for the lower bound is what we refer to as the \emph{model gap}, and we argue that this type of reasoning makes it unclear whether the NP-hard result of~\cite{BBMU15} holds for general concurrent systems or just ones that admit a succinct representation. This is why we took care to precisely specify the representation of our models, as this allows for more mathematically precise complexity results. 

This mathematical precision brings us to our second major point. The reductions presented in Lemma~\ref{explicitrealizabilitylowerboundlemma} and Lemma~\ref{circuitbasedrealizabilitylowerboundlemma} both reduce problems to $\emptyset$-NE realizability in two-agent systems. Since the number of agents in these reductions is fixed, this can also be seen as a ``model gap" in the same way as reducing to a system in which only one agent acts at a time, as both models avoid considering a transition function that has to reason about an exponentially-sized joint action set. 
Let us then consider how the reductions of Lemma~\ref{explicitrealizabilitylowerboundlemma} and Lemma~\ref{circuitbasedrealizabilitylowerboundlemma} behave when we insist that the systems we reduce to are general concurrent multi-agent systems, not just two-agent systems. That is, when reducing from a decision problem $L$, we require that an instance $l$ of $L$ of size $n$ be reduced to a system with $f(n)$ agents, where $f(\cdot)$ is an increasing function. Although these additional agents may not influence the system's transitions in a meaningful way, the transition functions must still model their action choices. 

For the circuit-based system, the reduction in Lemma~\ref{circuitbasedrealizabilitylowerboundlemma} can be done with a polynomial $f(\cdot)$. In other words, we can pad the reduction in Lemma~\ref{circuitbasedrealizabilitylowerboundlemma} with a polynomial number of ``useless" agents that do not influence the transitions of the system (but, unlike the aforementioned example from~\cite{BBMU15}, still factor into the representation of the transition function). The addition of an individual useless agent only increases the size of the system linearly, meaning that reducing to a system with a polynomial number of agents (the two agents originally in the reduction plus a polynomial number of useless agents) in Lemma~\ref{circuitbasedrealizabilitylowerboundlemma} still constitutes a valid polynomial time reduction. This is due to two major factors. First, the actions are represented symbolically, and so the representation of the joint action space grows linearly with the addition of each action. Second, the transition function is a circuit that can perform computations, giving us a way to coherently represent the fact that an agent's action should be ``ignored'' in the transition function.

For the explicit system, neither of the aforementioned points holds. Suppose that the number of agents we use in a reduction is $k$. Then, the transition table of the explicit system would be of size $O(2^k)$. Recall that the source of the reduction is an instance $l \in L$ of size $n$. In order for $2^k$ to be polynomial in $n$, a requirement of the LOGSPACE reduction used in Lemma~\ref{explicitrealizabilitylowerboundlemma}, we must have $k \leq \mathrm{polylog}(n)$. In Section~\ref{lowerboundexplicitverification}, we show that this ``polylogarithmic number of agents" restriction plays a significant role in our results on the complexity of the verification problem for explicit systems.  Independent of this paper, however, this restriction represents a major open issue in the literature. No such reduction using a polylogarithmic number of agents exists in the literature, and this is precisely why researchers have chosen to utilize restricted systems that admit smaller transition tables for lower bounds (for a few examples, see~\cite{BBMU15,iBG,RV22,RV25}). The ``\emph{model-gap argument}" then occurs when lower bounds in restricted systems are combined with upper bounds in more general systems, resulting in incorrect completeness statements.

In this way, the reduction for circuit-based systems can be seen as ``more robust" than that for explicit systems, a point that is further detailed in the verification problem discussion, Section~\ref{verificationdiscussion}. This further supports the idea that the circuit-based model should be adopted over explicit systems in the literature, the main point of this paper.


\section{The Complexity of Verification}

The verification problem takes a multi-agent system with $k$ agents, a subset of agents $W \subseteq \Omega$, and a strategy profile specified as $k$ finite-state transducers as input. In this work, the explicit and circuit-based representations of each are ``paired"; when we refer to the explicit model, we refer to an input specified by both an explicit multi-agent system and a strategy profile given by explicit finite-state transducers. The same applies to the circuit-based model.

\subsection{Explicit Model}

\subsubsection{Upper Bound}\label{upperboundexplicitverification}

The verification problem takes an explicit multi-agent system $\mathbb{G} = \langle V, v_0, \Omega = \{0 \ldots k-1\}, A = \{A_0 \ldots A_{k-1} \}, G = \{ G_0 \ldots G_{k-1} \}, \tau \rangle$ (Definition~\ref{explicitsystemdef}), a subset $W \subseteq \Omega$ of agents, and a strategy profile specified by explicit transducers (Definition~\ref{deterministictransducermodel}) as input. In a system with $k$ agents, the strategy profile takes the form of a tuple $\pi = \langle \pi_0 \ldots \pi_{k-1} \rangle$ where each $\pi_i$ is an explicit transducer $\pi_i = \langle S_i , s^i_0 , D, \gamma^i, O_i \rangle$.

Given the $k$ individual agent strategies $\pi_0 \ldots \pi_{k-1}$, the first conceptual step in establishing an upper bound for the verification problem is to analyze how all of these transducers behave together as opposed to individually, as this allows us to analyze how $\pi$ determines the execution of $\mathbb{G}$. In order to do this, we construct the cross product of all $k$ strategies to create a new transducer $\langle S, s_0 , D, \gamma, O \rangle$ that represents $\pi$ (and we therefore refer to this transducer when we reference $\pi$ going forward). Once we establish the details of the construction of $\pi$, we show that the verification problem can be solved through reachability queries in the cross-product construction $\mathbb{G} \times \pi$. This cross-product construction is exponentially large in the size of the input strategies; therefore, we never directly construct it, but instead reason about it on-the-fly. Full details are given in Section~\ref{explicitverificationupperboundappendix}. Since reachability queries can be conducted in NLOGSPACE~\cite{VW94}, this gives us our upper bound.
\begin{thm}
    For the explicit model, the $W$-NE verification problem is in PSPACE.
\end{thm}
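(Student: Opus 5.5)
The plan is to give a nondeterministic polynomial-space algorithm for the \emph{complement} of the verification problem. Since NPSPACE $=$ PSPACE (Savitch) and PSPACE is closed under complement, this yields the theorem. The algorithm works on the product $\mathbb{G}\times\pi$, where $\pi=\langle S,s_0,D,\gamma,O\rangle$ is the cross-product transducer with $S=\prod_i S_i$. A configuration $(v,s_0',\ldots,s_{k-1}')$ with $v\in V$ and $s_i'\in S_i$ is written with $O(\log|V|+\sum_i\log|S_i|)$ bits, which is polynomial in the input even though the number of configurations is exponential. One step of the product is computed by a table lookup: from $(v,\bar s)$, form the decision $d=\langle O_0(s_0')\ldots O_{k-1}(s_{k-1}')\rangle$, then move to $(\tau(v,d),\gamma^0(s_0',d),\ldots)$. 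Each lookup takes polynomial time and space.

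The profile $\pi$ fails to be a $W$-NE exactly when one of two things happens.

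\textbf{Case (a): $W_\pi\neq W$.} The primary trace is ultimately periodic in the product. After at most $N=|V|\cdot\prod_i|S_i|$ steps the configuration repeats, so every state the primary trace ever visits appears within the first $N$ steps. I will simulate the trace for $N$ steps using a binary counter of polynomially many bits, and record for each agent $i$ whether some visited $v$ lies in $G_i$. This uses a $k$-bit vector, and $k$ is at most the input size. Comparing this vector with $W$ decides (a) deterministically in polynomial space.

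\textbf{Case (b): some $i\notin W_\pi$ has a profitable deviation.} This means that in the game where the other agents play $\pi_{-i}$, Agent $i$ can force a visit to $G_i$. Because $\pi_{-i}$ is deterministic and everything is finite, Agent $i$ is effectively solving a one-player reachability problem. Deviating to $\pi'_i$ changes only Agent $i$'s action, and $\pi'_i$ can be any history-dependent function. So a profitable deviation exists iff $G_i$ is reachable in the graph whose vertices are configurations $(v,\bar s)$. In this graph, the other agents play $O_j(s_j')$, Agent $i$ chooses an arbitrary $a\in A_i$, and all components update on the resulting $d$, including Agent $i$'s own transducer state $s_i'$. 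Tracking $s_i'$ is harmless; it could also be dropped. Any path found this way yields a deviation strategy that simply outputs that action sequence. Conversely, any deviation induces such a path. The algorithm guesses $i$ and then guesses the path step by step, storing only the current configuration and a step counter bounded by $N$. This is an NLOGSPACE-style reachability check on an exponential graph with polynomial-size vertex encoding, hence in NPSPACE.

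The main point of care is this equivalence between deviations and paths, and in particular the detail that the transducers read full decisions $D$. A deviation therefore changes the other agents' future actions, and the graph must update their states using the \emph{deviated} decision. Once this is set up correctly, correctness follows from the definition of $W$-NE. Combining the two cases gives an NPSPACE procedure for the complement, so verification is in PSPACE.
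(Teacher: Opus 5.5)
Your proposal is correct and follows essentially the same route as the paper. You build the cross-product transducer, explore $\mathbb{G}\times\pi$ on the fly using polynomial-size encodings of configurations, check whether $W_\pi = W$ on the primary trace, check reachability of $G_j$ in the deviation structure $\mathbb{G}_j\times\pi$ for each $j\notin W$ (with all transducers updated on the deviated decision), and conclude via NPSPACE $=$ PSPACE and closure under complement. The differences are only cosmetic: you check the primary trace deterministically by simulating $N=|V|\cdot\prod_i|S_i|$ steps with a binary counter rather than guessing the start of the cycle, and you justify the deviation check by a direct correspondence between paths and strategies rather than by appealing to memoryless optimal policies in an MDP, which is, if anything, cleaner.
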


\subsubsection{Lower Bound}\label{lowerboundexplicitverification}

In this section, we demonstrate the matching PSPACE-hard lower bound for the restricted family of \emph{turn-based} systems (see~\ref{reachabilitygamesection} for more details). This is a quintessential model gap argument, as in turn-based systems, only one agent chooses an action at a given state, and therefore, the transition function never needs to represent the exponential joint action space of multiple agents choosing actions simultaneously. 

We discuss explicitly why such an argument is used in the verification discussion section, Section~\ref{verificationdiscussion}. Essentially, this is a feature, not a bug, as the problem of establishing a lower bound with the explicit model given has been a recurring issue in the literature for some time (see~\cite{RV22,RBV23} for some discussion). In this paper, we argue for the use of the circuit-based model, and the lack of a ``proper" reduction only bolsters this argument, as we explain after. The problem we reduce from is:
\begin{problem*}
    Given a deterministic Turing Machine $M$ and a natural number $n \in \mathbb{N}$ written in unary, does $M$ accept the empty tape using at most $n$ cells?
\end{problem*}

And as before, the full details of the reduction, which reduces the acceptance of $M$ to the existence of an $\Omega$-NE in an explicit multi-agent system $\mathbb{G}$, can be found in Section~\ref{explicitlowerboundverificationappendix}.

\begin{lem}\label{explicitverficationlowerboundlemma}
    The machine $M$ accepts the empty tape in at most $n$ iff the strategy profile $\pi$ is an $\Omega$-NE in $\mathbb{G}$.
\end{lem}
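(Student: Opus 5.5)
Since the statement presupposes a specific construction of $\mathbb{G}$ and $\pi$ from $M$ and $n$, I will fix one and then prove the equivalence. The central idea is to let the cross product of small transducers simulate the tape of $M$. Only the composite of these transducers is exponential in size.

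\textbf{Construction.} I use a turn-based explicit system with polynomially many agents. There are $n$ \emph{cell agents}, one per tape cell. Cell agent $j$ has an explicit transducer whose states record the current symbol of cell $j$, whether the head is on cell $j$, and (when it is) the control state of $M$. There is also one \emph{controller agent}, which is the only agent whose action affects transitions of $\mathbb{G}$. Because only this one agent influences $\tau$, the transition table stays polynomial, which is exactly the model-gap trick discussed in Section~\ref{lowerboundexplicitverification}.

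The computation proceeds in rounds. In each step, the agent holding the head broadcasts the triple (symbol written, move direction, next state) through its action. Every cell transducer reads this off the observed decision $d \in D$ and updates locally: the written cell changes its symbol, and the neighbouring cell takes over the head. The controller's transducer reads the broadcast and moves $\mathbb{G}$ to an accepting sink $v_{acc}$ when $M$ enters its accepting state, and to a rejecting sink $v_{rej}$ when $M$ rejects. If $M$ runs forever in $n$ cells, the play stays in a waiting state $v_{run}$; one can also add a step counter of $n|Q||\Gamma|^n$ steps as a binary counter spread over $O(n)$ extra agents.

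Goals are $G_i = \{v_{acc}\}$ for all $i$. We set $W = \Omega$. There is one subtlety: the explicit transducers read $D$, whose size is at least $2^k$, so the representation of the $\gamma^i$ must also be polynomial. To handle this, I restrict all agents other than the "active" one to a fixed dummy action at each state. Alternatively, I make $D$ polynomial by giving only the broadcasting agent a meaningful action set of size $\geq 2$ with the rest fixed. This point needs care, given the requirement $|A_i|\ge 2$, and I expect it to be the main technical obstacle: ensuring that each $\gamma^i$ table has polynomial size even though $|D|\ge 2^k$. My first attempt would be to let only the head agent's action matter and to encode the $\gamma^i$ so that they depend on that action alone, assuming the explicit table format permits a compact listing. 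If it does not, I would restrict to $O(\log n)$ genuinely active agents and use the others as pure state holders via the controller.

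\textbf{Correctness.} The argument has two directions.

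($\Rightarrow$) Suppose $M$ accepts within $n$ cells. Then the composite transducer faithfully simulates $M$; this is proved by induction on steps, showing that after round $t$ the local states encode the $t$-th configuration. So the primary trace reaches $v_{acc}$, all agents win, and $W_\pi=\Omega$. Since every agent wins, there is no losing agent to check for deviations, so $\pi$ is trivially an $\Omega$-NE.

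($\Leftarrow$) Suppose $M$ rejects or loops. By the same induction the primary trace never reaches $v_{acc}$, so $W_\pi=\emptyset\neq\Omega$, and $\pi$ is not an $\Omega$-NE.

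Finally, the construction is computable in logarithmic space from $M$ and $n$. This yields PSPACE-hardness for the (turn-based) explicit model.
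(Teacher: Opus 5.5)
Your correctness argument is fine and matches the paper's: induction on steps, plus the observation that for $W=\Omega$ there are no losing agents whose deviations need checking. The problem is the construction underneath it. It is not a polynomial-time reduction to the explicit model, and it fails exactly at the point you flag as the ``main technical obstacle''.

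The play sits in the single state $v_{run}$ while the head moves, so the state of $\mathbb{G}$ does not determine which cell agent is broadcasting. Hence at $v_{run}$ the decision must carry live actions of all $n$ cell agents (whichever one holds the head) together with the controller's action (which drives $\tau$). That is a genuinely concurrent joint decision, so $|D|\ge 2^{n+1}$. By Definitions~\ref{explicitsystemdef} and~\ref{deterministictransducermodel}, both $\tau$ (a table with $|V|\cdot|D|$ rows) and every $\gamma^i$ (with $|S_i|\cdot|D|$ rows) are then exponential.

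In particular, your claim that $\tau$ stays polynomial because only the controller influences it is false in this model. The explicit table lists every decision whether or not it matters, which is exactly the phenomenon the paper is about. The design is also not turn-based: if $v_{run}$ is owned by the controller alone, the head agent's broadcast never appears in the observed decision, and the cell transducers cannot update.

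None of your fallbacks repairs this:
\begin{itemize}
\item A ``compact listing'' is excluded by the definitions.
\item Fixing the other agents' actions violates $|A_i|\ge 2$. In any case the broadcasting agent changes over time, so no static choice works.
\item Having $O(\log n)$ ``active'' agents does not shrink $D$, which is a product over \emph{all} agents. Cutting the total number of agents to $O(\log n)$ instead forces each transducer to remember about $n/\log n$ cells, i.e.\ superpolynomially many states. This is exactly the open difficulty described in Section~\ref{verificationdiscussion}.
\end{itemize}

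The missing idea is to make the system genuinely turn-based by putting the head position into the state of $\mathbb{G}$, as the paper does. States $i\in V_1=\{0,\ldots,n-1\}$ are owned by Agent $i$, whose action in $R\times\Gamma\times\{\rightarrow,\leftarrow\}$ (the transition of $M$ at cell $i$) is the entire decision. $\tau$ then moves to an intermediate state $\langle i,a\rangle\in V_2$, which records who acted and what was written. From there it moves to $i\pm1$, to the common goal state ``accept'' if the announced state is in $F$, or to a ``reject'' sink on an out-of-bounds move.

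Since only one agent acts at each state, the decision alphabet is polynomial, so $\tau$ and all $\gamma^i$ are polynomial tables. No controller agent is needed, because $\tau$ itself detects acceptance. With this system the transducers simulate $M$: the paper's track the subID of cells $i-1,i,i+1$, and your single-cell states also suffice once the broadcaster's identity is observable. Your two correctness directions then go through unchanged. The step counter is unnecessary, since a looping computation simply never reaches ``accept''.
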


As mentioned before, the reduction detailed in Section~\ref{explicitlowerboundverificationappendix} only holds for a \emph{turn-based} system $\mathbb{G}$.

\begin{thm}
        For the {\bf turn-based} explicit model, the $W$-NE verification problem is PSPACE-complete.
\end{thm}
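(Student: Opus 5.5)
The theorem has two parts. Membership in PSPACE comes from the general upper bound: a turn-based explicit system is a special case of an explicit system. Its representation can be unfolded in polynomial time into the explicit table of Definition~\ref{explicitsystemdef}, since only one agent's action matters at each state. Concretely, we take the table over the relevant agent's actions, replicated over a constant-size dummy action set for the others. If one insists on $|A_j|\geq 2$ for every agent, the replication costs $2^{k}$. That is polynomial because the number of agents will be kept small, or it is avoided entirely by regarding turn-based tables as indexed only by the active agent's action. We then run the on-the-fly reachability algorithm over $\mathbb{G}\times\pi$ from the preceding theorem.

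The substance is PSPACE-hardness, via Lemma~\ref{explicitverficationlowerboundlemma}. I would reduce from the stated space-bounded acceptance problem for a deterministic $M$ with bound $n$. The key idea is to use the exponential size of the product $\mathbb{G}\times\pi$ of polynomially many small transducers: the configuration of $M$ is stored \emph{distributively in the transducer states}, not in $\mathbb{G}$. I would use $n$ cell agents. Agent $j$'s transducer remembers the contents of cell $j$ and whether the head is at $j$ together with the control state, which takes $O(|\Gamma|\cdot|Q|)$ states.

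The turn-based system $\mathbb{G}$ cycles through rounds. In each round the agents announce, in turn, their local information, and the arena records only a constant amount: the announced symbol, state and head flag. Every transducer reads the full decision history, so each agent can update its cell from the announcements of its neighbours $j-1,j,j+1$, exactly as in a standard local-update simulation. Under $\pi$ the primary trace faithfully simulates $M$. An accepting configuration causes the play to move to a vertex in every $G_i$, giving $W_\pi=\Omega$. All goals should be reachable only through that ``accept'' vertex, or through dedicated ``error'' vertices that punish inconsistent announcements.

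For the equivalence, take the case where $M$ accepts. Then the primary trace reaches the goal of every agent, $W_\pi=\Omega$, and since the losing set is empty $\pi$ is trivially an NE. Conversely, if $M$ rejects, then being deterministic and space-bounded it either rejects or loops. In either case the primary trace never reaches acceptance, so $W_\pi\neq\Omega$ and $\pi$ is not an $\Omega$-NE.

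This direction means deviations play no role, which makes the reduction simple. The remaining work is checking that $\mathbb{G}$ is polynomial: it is turn-based with constantly many choices per vertex, a round has $O(n)$ vertices, and the transducers have polynomially many states each. Note that each transducer's table is indexed by $D$, which is exponential in the number of agents. I expect this to be the main obstacle. To handle it, I would let the transducer transition depend only on the active agent's move at turn-based vertices, representing $D$ effectively as the disjoint union of the action sets; this is exactly the turn-based concession. Failing that, I would use only $O(\log n)$ agents, each holding $n/\log n$ cells, and accept a polynomial blow-up.
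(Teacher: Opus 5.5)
Your proposal is correct and follows essentially the paper's route. Membership comes from the on-the-fly PSPACE algorithm for the explicit model, and hardness comes from storing the configuration of $M$ across the states of $n$ small transducers, using a common ``accept'' goal so that $\Omega$-NE verification depends only on the primary trace, and relying on the turn-based concession that transducer tables are indexed only by the active agent's action. Your round-robin broadcast of every cell, versus the paper's arena that tracks the head position and lets only the agent under the head announce the $\Delta$-move, is an inessential variation.

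There are two caveats, both in side remarks rather than your main argument. First, your fallback of $O(\log n)$ agents each holding $n/\log n$ cells does not give a polynomial blow-up: each transducer would need $2^{\Theta(n/\log n)}$ states, which is exactly the obstacle the paper describes in its verification discussion. Second, for membership you cannot assume the number of agents is small, since the input is arbitrary. Only your second option is valid there: run the algorithm directly on the turn-based representation.
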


The use of the model gap for this result will be further discussed in Section~\ref{verificationdiscussion}. The technical results of this paper now continue with a consideration of the circuit-based model.

\subsection{Circuit-Based Model}

\subsubsection{Upper Bound}\label{upperboundcircuitverification}

Much as in Section~\ref{upperboundcircuitrealizability}, it is possible to ``unpack" the succinct representation of the circuit-based transducers and circuit-based multi-agent system to create equivalent explicit systems in at most exponential time (see Observation~\ref{conversion}). 

Just as before in Section~\ref{explicitverificationupperboundappendix}, we can construct a cross-product of the circuit-based transducers $\pi_0 \ldots \pi_{k-1}$ to create a cross-product transducer $\langle \Salt , \s_0 , {\tt D}, \omega, \Oalt \rangle$. The details of this construction are largely the same as the explicit cross-product construction given in Section~\ref{explicitverificationupperboundappendix}, so they are omitted here. 

The main point ot observe here is that $\pi$ is only \emph{exponential} in the size of the input, even when circuit-based transducers are considered. To see this, consider the explicit unfolding of a circuit-based transducer $\pi_i$ of size $m_i$. Creating an equivalent explicit system results in a system of size $O(2^{m_i})$. Now, taking the cross-product of these explicit unfoldings yields a transducer of size  $O(2^{\sum_{i \in \Omega}m_i})$, which is still only exponential in the size of the input.

The same applies to the circuit-based system  $\mathcal{G} = \langle \tt{V}, \tt{v_0}, \Omega = \{0 \ldots k-1 \}, \tt{A} = \{ \tt{A_0} \ldots \tt{A_{k-1}}\}, \tt{G}   = \{ \tt{G_0} \ldots \tt{G_{k-1}} \}, \varphi \rangle$. Given a circuit-based system $\mathcal{G}$ of size $g$, it is possible to create an explicit system of size at most  $O(2^{g})$. The constructions used in Section~\ref{explicitverificationupperboundappendix}, $\mathcal{G} \times \pi$ and $\mathcal{G}_i \times \pi$, then have size at most  $O(2^{g + \sum_{i \in \Omega}m_i})$, which is still singly-exponential in the size of the original input. 

The key point here is that the exponential blow-up from converting a circuit-based system to an explicit one does not compose with the exponential blow-up that arises when considering an unbounded cross-product. Rather, both blow-ups only result in a single exponential blow-up when considered together. An upper bound algorithm can then be established in exactly the same way as Section~\ref{explicitverificationupperboundappendix}. Due to the fact that the blow-ups do not compose, the analogous constructions $\mathcal{G} \times \pi$ and $\mathcal{G}_i \times \pi$ are both only exponential in the size of the input, meaning that utilizing the same algorithm presented in Section~\ref{explicitverificationupperboundappendix} (which never explicitly constructs the exponentially-large $\mathcal{G} \times \pi$, instead reasoning about it on-the-fly) results in the same PSPACE upper bound. 

\begin{thm}
    For the circuit-based model, the $W$-NE verification problem is in PSPACE.
\end{thm}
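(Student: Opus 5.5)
The plan is to reuse the on-the-fly algorithm of Section~\ref{explicitverificationupperboundappendix}, adapted so that it never writes down an exponentially large object. The argument has three parts: a succinct representation of the combined state space, a proof that one step of that space can be evaluated in polynomial space, and a reduction of the $W$-NE condition to reachability queries. Throughout, I would decide everything in NPSPACE and then invoke Savitch's theorem to land in PSPACE.

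\textbf{Step 1: the state space.} A configuration of $\mathcal{G}\times\pi$ is a tuple $(\tt v, \s_0',\ldots,\s_{k-1}', b)$, where:
\begin{itemize}
\item $\tt v\in 2^{\tt V}$ is the current system state;
\item $\s_i'\in 2^{\Salt_i}$ is the current state of the $i$-th transducer;
\item $b\in 2^{\Omega}$ is a bit vector recording which goals have already been visited.
\end{itemize}
Each such configuration has bit-length $|{\tt V}|+\sum_i|\Salt_i|+k$, which is linear in the input.

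\textbf{Step 2: one step in polynomial space.} The successor of a configuration along the primary trace is computed as follows.
\begin{itemize}
\item Evaluate each output circuit $\Oalt_i$ on $\s_i'$ to obtain the decision $d\in 2^{\tt D}$.
\item Evaluate $\varphi(\tt v,d)$ to get the next system state.
\item Evaluate each $\omega^i(\s_i',d)$ to get the next transducer states.
\item Evaluate the goal circuits ${\tt G}_j$ on the new state to update $b$.
\end{itemize}
Circuit evaluation is in polynomial time, so a single step needs only polynomial space.

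\textbf{Step 3a: the primary-trace check.} First we check that $W_\pi=W$. The primary trace is deterministic and the configuration space has at most $2^{\text{poly}}$ elements. So it suffices to simulate it for that many steps with a binary counter of polynomially many bits and then compare the final $b$ with $W$. Once every configuration has been repeated, $b$ is stable, so this final comparison is correct.

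\textbf{Step 3b: the deviation check.} For each $i\in\Omega\setminus W$ we check that Agent $i$ has no profitable deviation in the graph $\mathcal{G}_i\times\pi$. In this graph, Agent $i$'s transducer is discarded, its action bits ${\tt A_i}$ are guessed freely at each step, and every other agent is driven by its transducer. A profitable deviation is exactly a finite path from the initial configuration to a configuration whose system state satisfies ${\tt G}_i$.

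Two points justify this reduction. First, the other agents' transducers see only the sequence of decisions, and that sequence is fully determined by the path, so any finite deviating path can be extended to a strategy $\pi_i'\in\Pi_i$. Second, because strategies are arbitrary functions $D^*\to A_i$, the deviator gains nothing from memory beyond the path itself.

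This is graph reachability over a succinctly represented graph of exponential size. The procedure guesses the path one configuration at a time, each time guessing an assignment to ${\tt A_i}$ and computing the successor as in Step 2. It is therefore in NPSPACE, and hence in PSPACE by Savitch's theorem. We run it for each of the $k$ agents, reusing the same space each time.

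\textbf{Expected obstacle.} The main thing to verify is the claim made above the statement: the blow-up from unfolding the circuits and the blow-up from taking the product of the transducers combine into a single exponential rather than composing. Concretely, the configuration encoding must be polynomial, and no step of the algorithm may require the explicit unfolded table of any circuit. Once every component, namely successor computation, goal testing and deviation guessing, is shown to be done by evaluating the given circuits directly, the argument of Section~\ref{explicitverificationupperboundappendix} transfers verbatim. This yields the stated PSPACE bound.
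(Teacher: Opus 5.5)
Your proposal is correct and follows essentially the paper's route: an on-the-fly exploration of the singly-exponential products $\mathcal{G}\times\pi$ and $\mathcal{G}_i\times\pi$ for each $i\notin W$, with the deviation check done by guessing the deviator's actions step by step in NPSPACE $=$ PSPACE. The two differences are minor refinements. You evaluate the given circuits directly instead of unfolding them, which is what actually keeps the space polynomial and which the paper's size-counting argument leaves implicit. You also simulate the primary trace deterministically with a counter, where the paper guesses the cycle.
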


\subsubsection{Lower Bound}\label{lowerboundcircuitverification}

As shown in Section~\ref{upperboundexplicitverification}, the circuit-based model somewhat counterintuitively admits the same PSPACE upper bound as the explicit model. The reduction for the circuit-based system, however, uses a single-agent system as opposed to the multi-agent reduction in Lemma~\ref{explicitverficationlowerboundlemma}. This suggests that the complexity of reasoning about circuits is the primary source of the problem's hardness, rather than the complexity of reasoning about the interactions of multiple agents. Therefore, it is worthwhile to note that the lower bound presented in this section shows some parallels with results in the literature that address the complexity of verification with circuits; c.f.~\cite{OBDDcomplexity}. Since the upper bound for the circuit-based-model verification matches the upper bound for explicit-model verification, we reduce from the same problem as in Section~\ref{lowerboundexplicitverification}.

\begin{problem*}
    Given a deterministic Turing Machine $M$ and a natural number $n \in \mathbb{N}$ written in unary, does $M$ accept the empty tape using at most $n$ cells?
\end{problem*}

Although we reduce this to verification of an $\Omega$-NE in a $1$-agent system, this time, however, we \emph{do not} use a model-gap argument (as defined in Section~\ref{realizabilitydiscussion}). This point will be explicitly discussed in Section~\ref{verificationdiscussion}. As before, the full details are given in Section~\ref{circuitlowerboundverificationappendix}. 
\begin{lem}
    The machine $M$ accepts the empty tape in at most $n$ iff the strategy profile $\pi$ is an $\Omega$-NE in the one-agent system $\mathcal{G}$.
\end{lem}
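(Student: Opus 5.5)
We reduce from the polynomial-space acceptance problem for deterministic Turing machines, producing in logarithmic space a one-agent circuit-based system $\mathcal{G}$ and a single circuit-based transducer $\pi = \langle \pi_0 \rangle$. Since $\Omega = \{0\}$ and we ask about an $\Omega$-NE, there is no agent outside $W_\pi$ who could deviate. The Nash condition is therefore vacuous, and $\pi$ is an $\Omega$-NE iff the primary trace $t_\pi$ visits a state satisfying ${\tt G_0}$. The plan is to make the primary trace simulate the run of $M$ on the empty tape within $n$ cells, and to let ${\tt G_0}$ detect the accepting state.

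\textbf{Construction.} The environment variables ${\tt V}$ encode a configuration of $M$:
\begin{itemize}
\item $n \cdot \lceil \log |\Gamma| \rceil$ variables for the tape contents;
\item $\lceil \log n \rceil$ (or $n$ one-hot) variables for the head position;
\item $\lceil \log |Q| \rceil$ variables for the control state.
\end{itemize}
The initial state ${\tt v_0}$ encodes the initial configuration on the empty tape. The single agent has one action variable, which the transition circuit $\varphi$ ignores. $\varphi$ computes the successor configuration by the standard local update: for each cell $j$, the new symbol depends only on the old symbol at $j$, the current state, and whether the head is at $j$. Head and state are updated according to the transition function of $M$, and an accepting (or halting) configuration is mapped to itself. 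This circuit has size polynomial in $n$ and $|M|$. The goal circuit ${\tt G_0}$ tests whether the state variables encode $q_{acc}$. The transducer $\pi_0$ has one state proposition and always outputs a fixed action. All components are clearly constructible in logspace.

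\textbf{Correctness.} By Definition~\ref{primary trace}, $t_\pi[i]$ equals the encoding of the $i$-th configuration of $M$. This follows by induction on $i$ using the correctness of $\varphi$, with a stuttering convention for halting and for the case where $M$ would leave the $n$ cells (treat that as rejection by freezing in a non-accepting sink). Hence $0 \in W_\pi$ iff $M$ reaches $q_{acc}$ within the space bound. Combining this with the vacuity of the deviation condition gives the statement of the lemma.

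\textbf{Main obstacle.} The only delicate step is building $\varphi$ as a polynomial-size combinational circuit that correctly handles the head movement at the boundary cells and the out-of-space behaviour. I would fix these cases by adding one extra ``reject'' flag variable that, once set, freezes the configuration. The conceptual point to stress afterwards is that this is not a model-gap argument: the circuit-based model represents a one-agent system with no representational shortcut beyond what general systems already have. Padding with useless agents whose action variables $\varphi$ ignores keeps the reduction polynomial and preserves vacuity of the NE condition. Each padded agent needs its own goal ${\tt G_i}$ equal to ${\tt G_0}$ so that $W = \Omega$ still matches.
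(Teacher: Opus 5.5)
Your reduction is correct. It rests on the same two facts as the paper's proof. First, with $W=\Omega$ the deviation clause of the $\Omega$-NE is vacuous, so everything reduces to whether the primary trace reaches ${\tt G_0}$. Second, polynomially many variables suffice to encode an $n$-cell ID.

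The difference is where you put the machine's logic. In the paper, the strategy does the work:
\begin{itemize}
\item its state space copies the ID ($\Salt = {\tt V}$), and $\omega$ mirrors $\varphi$, since a strategy observes only decisions and not system states;
\item its output circuit ${\tt O}$ locates the head and evaluates $\Delta$, announcing an action in $[n]\times R\times\Gamma\times\{\rightarrow,\leftarrow\}$;
\item $\varphi$ merely applies the announced local update and never consults $\Delta$.
\end{itemize}
You instead build $\Delta$ directly into $\varphi$ and let $\pi_0$ be a one-state transducer with a constant, ignored output.

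Your route is more elementary, since there is no need to keep the strategy's memory synchronized with the system. It also sharpens the remark in Section~\ref{lowerboundcircuitverification}: verification is already PSPACE-hard when the input strategy is trivial. The hardness is then just that of reachability along the unique run of a succinctly represented deterministic system.

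The paper's route shows something complementary. The system circuit can be kept generic, independent of $\Delta$, with the machine's program carried by the input strategy. This keeps the reduction in the spirit of Lemma~\ref{explicitverficationlowerboundlemma}, where the transducers carry the computation and the question genuinely concerns the behavior of the given profile.

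Your remark on padding is a correct detail that the paper leaves implicit: giving each dummy agent the goal ${\tt G_0}$ keeps $W=\Omega$ as the right target.
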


\begin{thm}
For the circuit-based model, the $W$-NE verification problem is PSPACE-complete. 
\end{thm}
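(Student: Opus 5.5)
The statement has two halves. Membership in PSPACE is exactly the preceding theorem for the circuit-based model, obtained by reasoning on-the-fly over the singly-exponential product $\mathcal{G} \times \pi$. So the work is PSPACE-hardness, and I would get it by a polynomial-time reduction from bounded-space acceptance of a deterministic Turing machine $M$ with unary bound $n$, via the preceding lemma: $M$ accepts the empty tape within $n$ cells iff the given profile $\pi$ is an $\Omega$-NE in the one-agent system $\mathcal{G}$. Chaining the two gives PSPACE-completeness.

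\textbf{Construction.} The environment variables $\tt V$ encode a configuration of $M$ restricted to $n$ cells. That is $O(n \log|\Gamma|)$ bits for the tape contents, plus the head position (one-hot or binary) and the control state. The initial state $\tt v_0$ is the initial configuration on the empty tape. The single agent, Agent $0$, has one action variable, which the transition circuit $\varphi$ ignores. The circuit $\varphi$ computes the successor configuration of $M$ locally: each new cell bit depends only on the old values of cell $j$, its neighbours, the head bits and the control state. This is a polynomial-size combinational circuit of the standard Cook--Levin kind, and an accepting or rejecting configuration is mapped to itself. The goal circuit $\tt G_0$ tests whether the control state is accepting. The strategy $\pi_0$ is a trivial circuit-based transducer with one state that always outputs $0$, and we set $W = \Omega = \{0\}$.

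\textbf{Correctness.} Because the transitions ignore the action, every strategy of Agent $0$ induces the same trace, namely the run of $M$. That trace visits an accepting configuration iff $M$ accepts within $n$ cells; here I assume the standard normalization that $M$ halts, or use the fact that a deterministic run either accepts or never does. If $M$ accepts, then $W_\pi = \Omega$ and there is no losing agent, so $\pi$ is trivially an $\Omega$-NE. If $M$ does not accept, then $W_\pi = \emptyset \neq \Omega$, so $\pi$ is not an $\Omega$-NE. All components have size polynomial in $|M|$ and $n$, so the reduction runs in polynomial time.

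\textbf{Main obstacle.} The hardest point is conceptual rather than technical: one must argue that this is not a model-gap argument. The single agent's action variable is still genuinely represented in the circuit $\varphi$, and by the padding remark in Section~\ref{realizabilitydiscussion} we can add polynomially many useless agents, each with a constant transducer and its action variables ignored by $\varphi$, while keeping the reduction polynomial. With those added agents we would take $W = \Omega$ with every added agent's goal trivially true, or keep only Agent $0$ relevant. The circuit construction for $\varphi$ itself is routine.
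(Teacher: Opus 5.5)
Your proof is correct. Membership is handled exactly as in the paper. For hardness you reduce from the same problem and the paper's Lemma~5.5 reduces to: for $W=\Omega$ the deviation condition is vacuous, so $\pi$ is an $\Omega$-NE iff the primary trace reaches an accepting ID. What differs is where the machine's logic lives.

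\textbf{The paper's construction.} The transducer $\pi_0$ does the work. Its state space is $\Salt={\tt V}$, a copy of the current ID, and $\omega$ keeps this copy in sync with the system. Its output circuit $\Oalt$ locates the head and emits the $\Delta$-step as an action in $[n]\times R\times\Gamma\times\{\rightarrow,\leftarrow\}$. The system's $\varphi$ then only applies the announced local update and never needs to know $\Delta$.

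\textbf{Your construction.} You put all of $M$ into $\varphi$, make it ignore the action, and use a constant one-state strategy.

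\textbf{What each buys.} Yours is more elementary and somewhat sharper. It shows hardness already holds for a trivial strategy, i.e.\ it is just PSPACE-hardness of reachability in a circuit-represented deterministic transition system. This directly supports the paper's remark that the circuits, not agent interaction, drive the complexity. The paper's version mirrors the explicit reduction of Lemma~5.2, where the strategies carry out the simulation. It also shows hardness persists even when the system's transition circuit is oblivious to $\Delta$.

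\textbf{One detail to make explicit.} As the paper does, add a sink (reject) variable so that a move off either end of the $n$ cells goes to a non-accepting absorbing state. Otherwise, with a separately stored control state, a configuration whose head has fallen off the tape is not well defined.
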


\subsection{Discussion}\label{verificationdiscussion}
There are two important points for discussion regarding the verification problem. The first point regards the use of a model-gap argument (see Section~\ref{realizabilitydiscussion}) when establishing the lower bound for the explicit model in Section~\ref{explicitlowerboundverificationappendix}. The second point discusses why the lower bound in Section~\ref{circuitlowerboundverificationappendix} is \emph{not} a model-gap argument.

In Section~\ref{explicitverificationupperboundappendix}, the PSPACE complexity of the algorithm was entirely driven by the fact that an unbounded cross-product of $k$ transducers had to be taken. The natural question then becomes, why is the lower bound of Lemma~\ref{explicitverficationlowerboundlemma} only given for turn-based systems? Why is there no tight complexity result for the general concurrent explicit system? The answer is once again the model-gap, which illustrates the difficulty in establishing lower bounds for explicit systems. 

Suppose that we attempt to show a reduction from a PSPACE-hard problem $L$ to the $W$-NE verification problem for general concurrent explicit systems. If we use a number of agents that is linear in the size of the instance problem $L$, then we run into a problem with the size of the transition table -- namely, it is exponential w.r.t $L$. Recall that the PSPACE upper bound of Section~\ref{explicitverificationupperboundappendix} was driven by the fact that an unbounded cross-product had to be taken. Then, if the number of agents is fixed, this cross-product is no longer unbounded, and the algorithm of Section~\ref{explicitverificationupperboundappendix} runs in PTIME. This means that attempting to use a fixed number of agents in a PSPACE-hard reduction is equivalent to attempting to prove that PSPACE = PTIME.  The only option left is to use a sublinear number of agents, as discussed in Section~\ref{realizabilitydiscussion}.

A reduction that uses a logarithmic number of agents, for example, would be a perfect candidate.  As we noted earlier, however, such a reduction cannot be found in the literature, and the lack of this type of reduction is what motivated researchers to consider restricted models in the first place. In order to better understand why constructing such a reduction is difficult, let us see what happens when we try to modify the reduction in Lemma~\ref{explicitverficationlowerboundlemma} to use a logarithmic, as opposed to a linear, number of agents.

In Lemma~\ref{explicitverficationlowerboundlemma}, it was crucial to our reasoning that each agent only needed to track a subID of a fixed length of at most $3$. That bound was made possible by the fact that there were $n$ cells and $n$ agents. If we had a logarithmic number of agents, each agent would need to keep track of at least $O(\frac{n}{\log(n)})$ cells in order for the whole tape to be covered. The number of strings of length $O(\frac{n}{\log(n)})$ characters cannot be bounded by a polynomial in $n$ of fixed degree as $2^{ \frac{n}{\log(n)}}$ is superpolynomial, so the states spaces $S_i$ of the transducers $\pi_i$ could not be constructed in polynomial time. Therefore, modifying the reduction in Lemma~\ref{explicitverficationlowerboundlemma} to consider a logarithmic number of agents would require a novel non-trivial technique. 

Now, consider the reduction in Section~\ref{circuitlowerboundverificationappendix}. While we use only one agent in the reduction, it is possible to ``pad" this reduction to use $n$ agents as opposed to $1$. This can be seen by noting that adding a single ``useless" agent to a circuit-based system increases the size of the system linearly, as argued in Section~\ref{realizabilitydiscussion}. Therefore, when considering lower bounds for the verification problem, the circuit-based system does not introduce the same difficulties inherent to the explicit system and, therefore, the literature as a whole. As in Section~\ref{realizabilitydiscussion}, we take this as further evidence for the adoption of the circuit-based model over the explicit one in the literature.

\section{Details. Constructions, and Proofs}\label{appendix}

The previous sections focused on providing a high-level overview of the issues that arise when explicit systems are considered, focusing on the model gap. As a result, many lower-level details, constructions, and proofs were omitted to keep the focus on the conceptual narrative. In this section, we provide the lower-level details, constructions, and proofs omitted in the previous sections. When a lemma or theorem from a previous section is re-stated in this section, we use the previous numbering in order to tie the result back to the narrative text. New theorems, lemmas, and definitions are given their own numbering. 

\subsection{Constructions and Correctness of $A_W$ and $A'_W$}\label{awappendix}

This section corresponds to Section~\ref{upperboundexplicitrealizability}.

For a multi-agent system $\mathbb{G} = \langle V, v_0, \Omega = \{0 \ldots k-1\}, A = \{A_0 \ldots A_{k-1} \}, G = \{ G_0 \ldots G_{k-1} \}, \tau \rangle$,
the B\"uchi automaton $A_W$ is defined by the 5-tuple $\langle V \times 2^{\Omega}, \langle v_0, W \rangle, D, \beta, F \rangle$ where 
\begin{enumerate}
    \item $V \times 2^{\Omega}$ is the set of states, with $\langle v_0 , W \rangle$ the initial state.
    \item The set $D = \bigtimes_{i \in \Omega} A_i$
    of decisions in $\mathbb{G}$, is the input alphabet
    \item $\beta: V \times 2^{\Omega} \times D \rightarrow V \times 2^{\Omega}$ is the transition function, defined as follows. For a state $\langle v, U \rangle \in V \times 2^\Omega$ and a decision $d \in D$, we have that $\beta(\langle v,s \rangle , d)$ is given by $\langle \tau(v,d), U' \rangle$, where $\tau$ is the transition function of $\mathbb{G}$. It now remains to describe how $U$ updates to $U'$. Intuitively, $U \in 2^{\Omega}$ tracks all of the agent goals for agents in $W$ to make sure that all goals for agents in $W$ and only the agents in $W$ are eventually satisfied. 
    Thus, if the transition moves to a state $\tau(v,d) \in G_j$ for some $j \not \in W$, then we say that the entire transition $\beta(\langle v,U \rangle, d)$ is undefined; upon reading $d \in D$ the automaton gets stuck and rejects. This represents the case where an agent outside of $W$ meets their goal, corresponding to a violation of the $W$-NE requirements and thus resulting in rejection by $A_W$.
    On the other hand, if the transition moves to a state $ \tau(v,d) \in G_i$ for some $i \in W$ and $ i \in U$, then Agent $i$'s goal has been met, so we can set $i \not \in U'$; that is Agent $i$'s goal has been removed from the list of outstanding goals in $W$ to satisfy. It is possible for the goals of multiple agents to be met by one transition, in which case these multiple agents are removed from $U$ simultaneously. 
    \item $F$, the set of final states, is given by the set $V \times \{\emptyset \}$. These states represent the set of states that can only be reached after all agents in $W$ have fulfilled their reachability goals. If the states in $F$ are visited infinitely often without $A_W$ getting stuck, this corresponds to a valid primary trace.
\end{enumerate}

A word $p \in D^{\omega}$ is then accepted by $A_W$ iff it corresponds to a decision sequence that corresponds to a valid primary trace $t \in V^{\omega}$ in $\mathbb{G}$. This can be seen intuitively by noting that the automaton $A_W$ starts in $\langle v_0, W \rangle$ and must satisfy the goals for each $i \in W$ before moving to a final state, and further noting that if a goal for an agent $j \not \in W$ is ever satisfied, then the automaton rejects. 

\begin{lem}
 Given an explicit multi-agent system $\mathbb{G}$ and a set of agents $W \subset \Omega$, a decision sequence $p \in D^{\omega}$ induces a valid primary trace $t \in V^{\omega}$ -- that is, one that satisfies all of the goals $\{G_i : i \in W\}$ and none of the goals $\{G_j : j \not \in W \}$) -- iff $p \in \mathcal{L}(A_W)$.
\end{lem}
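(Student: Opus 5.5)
The plan is to prove both directions by tying the unique run of $A_W$ on a word $p$ to the trace $t$ that $p$ induces. Since $\beta$ is deterministic, an infinite run is unique whenever it exists. I will first establish an invariant by induction on $n$. For $p \in D^{\omega}$ with induced trace $t$, suppose the run of $A_W$ on the prefix $p[0]\ldots p[n-1]$ is defined. Then it ends in the state $\langle t[n], U_n \rangle$, where
\[ U_n = W \setminus \{ i : \exists m,\ 1 \le m \le n,\ t[m] \in G_i \}. \]
Moreover, the run is defined on this prefix iff no $t[m]$ with $1 \le m \le n$ lies in $G_j$ for some $j \notin W$.

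The base case $n=0$ is immediate: the initial state is $\langle v_0, W\rangle$ and $t[0]=v_0$. The inductive step is a direct unfolding of $\beta$. The first component follows $\tau$ exactly as $t[n+1]=\tau(t[n],p[n])$ does in Definition~\ref{tracedef}. The second component removes exactly those $i \in W$ whose goal contains $t[n+1]$. The transition is undefined exactly when $t[n+1] \in G_j$ for some $j \notin W$.

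\textbf{Left-to-right.} Suppose $t$ meets every $G_i$ with $i \in W$ and no $G_j$ with $j \notin W$. By the invariant, the run never gets stuck, so it is infinite. Each $i \in W$ is removed from $U$ at some finite step, and $U$ only shrinks. So from some point on the state is permanently of the form $\langle v, \emptyset\rangle \in F$, and $F$ is visited infinitely often. Hence $p \in \mathcal{L}(A_W)$.

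\textbf{Right-to-left.} Suppose $p$ is accepted. Then the run is infinite, so by the invariant no $t[m]$ with $m \ge 1$ lies in a goal outside $W$. The run also visits $F$ at least once, so some $U_n=\emptyset$, which by the invariant means every $i \in W$ has $t[m] \in G_i$ for some $m \le n$.

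The main obstacle is a boundary mismatch at the initial state. $A_W$ starts in $\langle v_0, W\rangle$ without checking whether $v_0$ itself lies in some $G_j$. Under the paper's definition, $t[0]=v_0$ counts toward satisfying goals, so $v_0$ must be handled in both directions.

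\begin{itemize}
\item If $v_0 \in G_i$ for some $i \in W$, that goal is satisfied on $t$ but never removed from $U$. The left-to-right direction then fails unless $t$ also revisits $G_i$.
\item If $v_0 \in G_j$ for some $j \notin W$, the trace is invalid, yet the automaton never detects it. The right-to-left direction then fails.
\end{itemize}

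I would resolve this by adjusting the initial state. It becomes $\langle v_0, W \setminus \{i : v_0 \in G_i\}\rangle$ when $v_0 \notin G_j$ for all $j \notin W$. When $v_0 \in G_j$ for some such $j$, $A_W$ is declared empty. I would state this adjustment explicitly as a harmless convention. With it, the invariant holds with $m$ ranging over $0 \le m \le n$, and both directions go through verbatim.
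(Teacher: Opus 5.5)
Your proof is correct and follows the same route as the paper, which likewise argues directly from the definition of $\beta$: the run gets stuck on entering some $G_j$ with $j \notin W$, agent $i$ is discharged from $U$ on entering $G_i$, and the run accepts once $U=\emptyset$. Your invariant on the run state $\langle t[n],U_n\rangle$ just makes that argument precise.

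Your boundary observation is also correct, and the paper's proof overlooks it. Since $\beta$ only inspects successor states, $t[0]=v_0$ is never checked, so the lemma as literally stated fails whenever $v_0$ lies in some goal set. For example, take $W=\emptyset$, $v_0\in G_0$, and let every move lead to a sink outside $G_0$: then $A_W$ accepts every word, although every trace satisfies $G_0$. Your adjusted initial state, or equivalently assuming $v_0$ lies in no $G_i$, is exactly the repair needed.
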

\begin{proof}
    Suppose $p \in D^{\omega}$ is accepted by $A_W$. Then, it must be the case that no goal $\{G_j : j \not \in W \}$ is satisfied by $p$; otherwise $A_W$ would get stuck and reject. Furthermore, it must be the case that all goals $\{G_i : i \in W\}$ are satisfied by $p$, as the initial state of $A_W$ is $\langle v_0, W \rangle$
    and the only way to discharge an agent $i$ from the $W$-state-component is to visit $G_i$. Since the set of final states $V \times \{\emptyset\}$ is eventually reached by accepting words, reaching such a final state is equivalent to all goals $\{G_i : i \in W\}$ already having been satisfied. Therefore, $p$ corresponds to a trace $t \in V^{\omega}$ that is a valid primary trace.

    Now suppose that $p \in D^{*}$ corresponds to a valid primary trace $t \in V^{\omega}$ in $\mathbb{G}$. This means that in $\mathbb{G}$, $t$ eventually visits all of the sets $\{G_i : i \in W\}$ and none of the sets $\{G_j : j \not \in W\}$. By essentially the same logic as above, this means that $A_W$ accepts $p$, as visiting all of $\{G_i : i \in W\}$ means that $A_W$ reaches its set of final states, and avoiding all $\{G_j : j \not \in W\}$ means it never gets stuck and rejects. 
\end{proof}


The validity of the primary trace is not the only condition of the $W$-NE. It must also be checked that for every agent $j \not \in W$, it is not the case that they can unilaterally deviate from $\pi$ in order to satisfy their goal $G_j$. This condition can be checked by constructing an uninitialized reachability game (see Section~\ref{reachabilitygamesection}) $\mathbb{G}_j = \langle V \times D, V, E, G_j \rangle$ with an edge relation $E$ defined as follows: for $v \in V$, $\langle v, \langle v, d \rangle \rangle \in E$ for all decisions $d \in D$. For $\langle v, d \rangle \in V \times D$, we have $\langle \langle v,d \rangle, v' \rangle \in E$ iff there is an action $a_j \in A_j$ such that $\tau(v , d[a_j]) = v'$, where $d[a_j]$ refers to the decision $d \in D$ with its $j$-th component replaced by $a_j \in A_j$.

Intuitively, this game represents a deviation scenario from Agent $j$, who is represented by Agent $0$ in the reachability game. At state $v \in V$, Agent $1$, who represents a coalition of the other agents, announces a collective decision $d$ for all of the agents. Agent $0$, who plays the role of Agent $j$ in the reachability game, is then allowed to deviate from this decision by solely changing the $A_j$ component and choosing a different action, leading to a potentially new successor state. The goal of Agent $0$ is to eventually reach $G_j$, the reachability goal of Agent $j$ in $\mathbb{G}$.

The states $\Win_0(\mathbb{G}_j)$ then represent the ``dangerous" states from which Agent $j$ can force play to visit $G_j$. Visiting states in $\Win_0(\mathbb{G}_j)$ leads to violation of the $W$-NE condition, as once such a state is visited, Agent $j$ has a beneficial deviation. We now construct a B\"uchi automaton $A'_W = \langle V' \times 2^\Omega, \langle v_0 , W \rangle, D, \beta', F \rangle$, which is the B\"uchi automaton $A_W$ with both $V$ and $\beta$ restricted. First, we prune out from $V$ states belonging to $\Win_0(\mathbb{G}_j)$ for agents $j \not \in W$, yielding $V'$. For $\beta'$, if transition moves to a state outside of $V'$ or if $A'_W$ is in a state $v$ and reads $d \in D$ such that $\langle v, d\rangle  \in \Win_0(\mathbb{G}_j)$ for some $j \not \in W$, then the transition is undefined, and the automaton gets stuck, so it rejects. The main result of our upper bound algorithm is then:
\begin{thm}
    Nonemptiness of $A'_W$ is equivalent to the existence of a $W$-NE in an explicit multi-agent system $\mathbb{G}$.  
\end{thm}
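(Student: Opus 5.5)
We prove the two directions separately, relying on the preceding lemma (words accepted by $A_W$ are exactly the decision sequences inducing valid primary traces) and on determinacy of the reachability games $\mathbb{G}_j$ from Section~\ref{reachabilitygamesection}.

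\emph{($\Leftarrow$) From a $W$-NE to nonemptiness.} Let $\pi$ be a $W$-NE with decision sequence $p$ and primary trace $t_\pi$. By the preceding lemma, $p \in \mathcal{L}(A_W)$. It remains to show that the run of $A'_W$ on $p$ never hits a pruned state or a pruned pair.

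Suppose, for contradiction, that at some step $m$ the trace is at $v = t_\pi[m]$ and either
\begin{itemize}
\item $v \in \Win_0(\mathbb{G}_j)$ for some $j \notin W$, or
\item $\langle v, p[m] \rangle \in \Win_0(\mathbb{G}_j)$ for some $j \notin W$.
\end{itemize}
In both cases Agent $0$ in $\mathbb{G}_j$ has a winning strategy $\sigma$ from the position reached.

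We turn $\sigma$ into a deviation $\pi'_j$ for Agent $j$:
\begin{itemize}
\item Follow $\pi_j$ for the first $m$ steps.
\item Afterwards, at each step, simulate the other agents' strategies $\pi_{-j}$ on the observed history. Here it matters that strategies read $D^*$, so Agent $j$ sees the full decision history and can compute what the others will play. This gives the proposed decision $d$.
\item Play the action $a_j$ that $\sigma$ selects at the game position $\langle v, d\rangle$.
\end{itemize}
Every play in which the others follow $\pi_{-j}$ is a play of $\mathbb{G}_j$ consistent with $\sigma$, because Agent $1$'s announcements are exactly the decisions produced by $\pi_{-j}$. So $G_j$ is eventually reached, giving a profitable deviation. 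This contradicts $\pi$ being a NE. Hence $p \in \mathcal{L}(A'_W)$.

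\emph{($\Rightarrow$) From nonemptiness to a $W$-NE.} Take $p \in \mathcal{L}(A'_W)$. Since nonempty B\"uchi languages contain lasso words, we may take $p$ to be ultimately periodic, although this is not needed for existence. By the lemma, $p$ induces a valid primary trace. Define a profile $\pi$ with a \emph{grim-trigger} scheme:
\begin{itemize}
\item While the history is a prefix of $p$, every agent plays its component of the next letter of $p$.
\item At the first deviation, made by a single agent $j$ at step $m$ from state $v = t[m]$ after the proposed decision $d = p[m]$, the agents other than $j$ switch to a coalition strategy for Agent $1$ in $\mathbb{G}_j$ from the successor state.
\item If several agents deviate simultaneously, behave arbitrarily. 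This case does not matter for unilateral deviations.
\end{itemize}
The primary trace is $t_p$, so $W_\pi = W$.

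For $j \in W$ there is nothing to check. For $j \notin W$, the pair $\langle v, d\rangle$ is not in $\Win_0(\mathbb{G}_j)$, since otherwise $A'_W$ would be stuck. By determinacy, $\langle v,d\rangle \in \Win_1(\mathbb{G}_j)$. Therefore every successor $\tau(v, d[a_j])$ lies in $\Win_1(\mathbb{G}_j)$, and the coalition's strategy keeps the play out of $G_j$ forever.

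The non-deviating prefix must also avoid $G_j$. This holds because the trace never visits $G_j$ for $j \notin W$, which is exactly the condition that $A_W$ does not reject. So no agent outside $W$ has a profitable unilateral deviation, and $\pi$ is a $W$-NE.

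\emph{Expected obstacle: implementing the coalition within the strategy model.} The main difficulty is making the coalition strategy implementable as individual per-agent strategies $D^* \to A_i$. The coalition's Agent $1$ move is a joint decision $d$ for all agents, while each agent only controls its own component. We handle this as follows:
\begin{itemize}
\item Every agent can identify the deviator $j$ and the deviation point from the common observed history of decisions.
\item Every agent can compute the same canonical memoryless Agent $1$ strategy of $\mathbb{G}_j$.
\item Each agent then plays its own component of the announced $d$, and Agent $j$'s component is overridden by whatever Agent $j$ actually chooses. This matches the edge relation of $\mathbb{G}_j$ exactly.
\end{itemize}
We also need to check that the coalition's positional strategy is well defined along the deviating play, and that if the deviator deviates again later nothing breaks: the play stays inside $\Win_1(\mathbb{G}_j)$ regardless of Agent $j$'s choices.
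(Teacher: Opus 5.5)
Your proposal is correct and follows essentially the same route as the paper. For one direction, you play the accepted word and, on a unilateral deviation by $j$, switch to Agent $1$'s winning strategy in $\mathbb{G}_j$, using that Agent-$0$ positions in $\Win_1(\mathbb{G}_j)$ have all successors in $\Win_1(\mathbb{G}_j)$; conversely, a visit to $\Win_0(\mathbb{G}_j)$ along the primary run yields a profitable deviation via Agent $0$'s winning strategy. You are more explicit than the paper (simulating $\pi_{-j}$ to build the deviation, realizing the coalition strategy componentwise, and noting that ultimate periodicity is only needed for finite-state implementability), but the argument is the same.
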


\begin{proof}
Before moving on to the proof, we note that analogous results have been proven in~\cite{RV21,RBV23}, but the setting used in this paper is more general than~\cite{RV21} and not directly comparable to~\cite{RBV23}. We now develop a proof tailored to our setting, making a specific effort to avoid the highly technical tree-automata machinery of~\cite{RV21}. 

 Suppose that $A'_W$ is nonempty. Then, it accepts at least one word $p \in D^{\omega}$.  By the construction of $A'_W$, this word induces a trace $t \in V^{\omega}$. For convenience, we interleave both the trace and the word to create a sequence of pairs $\sigma = \langle v_0 , d_0 \rangle \ldots \in ( V \times D)^{\omega}$ where $v[i] = t[i]$ and $d[i] = p[i]$.

    By the construction of $A'_W$, $\sigma$ does not contain a pair $\langle v, d \rangle$ such that $\langle v, d \rangle \in \Win_0(G_j)$ for some agent $j \not \in W$ -- if it did, then $A'_W$ would get stuck while reading $p$ and reject.  Furthermore, we also know that $t$ visits every set $\{ G_i : i \in W \}$ and none of the sets $\{ G_j : j \not\in W \}$.

    A strategy profile $\pi$ is a $W$-NE (Definition~\ref{WNE}) iff it satisfies two conditions~\cite{RV21}.
    \begin{enumerate}
        \item If no agents deviate from $\pi$, then only the agents in $W$ satisfy their goal.
        \item All agents not in $W$ are not capable of unilaterally deviating to satisfy their goal.
    \end{enumerate}

    We now use $\sigma$ to construct a $W$-NE strategy profile $\pi$. The idea is that $p$ serves as the output for $\pi$ when no deviations occur. When a deviation does occur from Agent $j$, $\pi$ simulates the winning strategy for Agent $1$ in $G_j$. Furthermore, it should be noted that since agents observe the history $D^*$ through their strategies, once an Agent $j$ deviates, it is immediately clear to the other agents that Agent $j$ has deviated.

    The first point, that $p$ is the output of $\pi$ when no deviations occur, is relatively straightforward. It should be noted that this output is achievable by finite-state transducers, as if a B\"uchi automaton is nonempty, then it accepts at least one ultimately periodic word of the form $\alpha \cdot \beta^\omega$ where both $\alpha$ and $\beta$ are finite. 

    For the second point concerning deviations, consider that every element of $\sigma$ belongs to $\Win_1(G_j)$ for all agents $j \not \in W$ -- this can be seen by noting that they necessarily do not belong to $\Win_0(G_j)$ and that $\Win_1(G_j)$ and $\Win_0(G_j)$ partition the entire state space of $G_j$.

    This means that for every pair $\langle v,d \rangle \in \sigma$, $\langle v,d \rangle \in \Win_1(G_j)$. Note that these states are controlled by Agent $0$ (who represents Agent $j$) in $G_j$. Therefore, since $\langle v,d \rangle \in \Win_1(G_j)$, it must be the case that all transitions from $\langle v,d \rangle$ (representing the possible deviations from Agent $j$) transition to states in $\Win_1(G_j)$ as well -- otherwise, a transition to $\Win_0(G_j)$ would be possible and Agent $0$ would have a winning strategy from that point on, not Agent $1$. 

    This gives us the construction of our $W$-NE $\pi$. When no deviations occur, $\pi$ outputs $p$. When a deviation does occur, $\pi$ simulates the winning strategy in $G_j$ to ensure that the deviating Agent $j$ cannot meet their goal. Since $p$ can be assumed to be ultimately periodic and the games $G_j$ are reachability games that admit memoryless winning strategies, $\pi$ can be implemented through the use of finite-state transducers.

    Now suppose that a $W$-NE $\pi$ exists in $\mathbb{G}$. By taking its output when no deviations occur, we get a word $p \in D^{\omega}$. This $p$ must be accepted by $A'_W$ by the exact same logic as above. It must be the case that $p$ induces a trace that satisfies only the goals for agents in $W$, satisfying the first requirement, and it cannot correspond to a state-decision pair $\sigma$ that contains an element belonging to $\Win_0(G_j)$ for some $j \not \in W$ -- otherwise, Agent $j$ could profitably deviate from $\pi$ by playing the winning strategy in $G_j$. These are the two conditions needed for $A'_W$ to accept, finishing the proof.

\end{proof}

\subsection{Explicit Realizability Lower Bound Reduction}\label{explicitlowerboundrealizabilityappendix}

This section corresponds to Section~\ref{lowerboundexplicitrealizability}.

We reduce the problem of solving an initialized reachability game $\mathcal{R} = V_0, V_1, v_{in}, E , R \rangle $  to the existence of an $\emptyset$-NE in the following explicit multi-agent system $\mathbb{G}_{\mathcal{R}} = \langle V, v_{in}, \{0,1\}, A = \{ V, V \}, G = \{ R \cup \hat{1},\emptyset\}, \tau \rangle$ with the following interpretations.
\begin{enumerate}
    \item $V= V_0 \cup V_1 \cup \hat{0} \cup \hat{1}$, the set of states, is the same as the set of states of $\mathcal{R}$, with the addition of two new states, $\hat{0}$ and $\hat{1}$. The initial state $v_{in}$ remains unchanged.
    \item The set of agents, $\{0,1\}$, as in $\mathcal{R}$.
    \item The action sets correspond to the set of states.
    \item The reachability goals are $R \cup \hat{1}$ for Agent $0$ and $\emptyset$ for Agent $1$ (that is, Agent $1$ has an unsatisfiable goal). 
    \item $\tau$, the transition function, is defined as follows. To compute the transition from a state $v$ and a decision $\langle v_0,v_1 \rangle$, it must first be determined whether $v \in V_0$ or $V_1$. If $v \in V_0$, then $v_1$ is ignored. If $\langle v, v_0 \rangle \in E$, then transition moves to $v_0$; otherwise, transition moves to $\hat{0}$, signifying an illegal move made by Agent $0$. 
    Analogously, if $v \in V_1$, then $v_0$ is ignored. If $\langle v, v_1 \rangle \in E$, then transition moves to $v_1$; otherwise, transition moves to $\hat{1}$, signifying an illegal move made by Agent $1$.  The states $\hat{0}$ and $\hat{1}$ are sink states, meaning they self-transition on every input $d \in D$.
\end{enumerate}

Note that although this multi-agent system involves a polynomial blow-up from the original reachability game specification (especially considering the size of $\tau$), the reduction can easily be seen to be computable in LOGSPACE, as the new components are obtained through the computation of trivial functions with respect to the input reachability game. This is readily observable, for example, with the transition function $\tau$, as it is essentially just a polynomially-sized padded representation of the edge set $E$. Therefore, this is a valid LOGSPACE reduction.


Consider now what an $\emptyset$-NE looks like in this system.
It would be a strategy profile $\pi=\langle \pi_0, \pi_1 \rangle$ such that (1) neither Agent 0 nor Agent 1 reach their reachability goals on the primary trace for $\pi$,  and (2) neither Agent 0 nor Agent 1 can reach their reachability goals by deviating from their respective strategies, $\pi_0$ and $\pi_1$. Since Agent 1 has an empty (and therefore, unsatisfiable) goal, it is not possible for him to have a profitable deviation. This leaves Agent 0, and so an $\emptyset$-NE exists in $\mathbb{G}_{\mathcal{R}}$ iff Agent 0 does not have a profitable deviation from $\pi$.

\begin{manuallemma}{4.2}
    An $\emptyset$-NE exists in $\mathbb{G}_{\mathcal{R}}$ iff Agent 1 has a winning strategy in $\mathcal{R}$.
\end{manuallemma}

\begin{proof}
    ($\rightarrow$) Let $\pi = \langle \pi_0, \pi_1 \rangle$ be an $\emptyset$-NE in $\mathbb{G}_{\mathcal{R}}$. Since $\pi$ is a $W$-NE, this means that for all Agent 0 strategies $\pi'_0 \in \Pi_0$, the primary trace resulting from $\langle \pi'_0, \pi_1 \rangle$ does not visit the Agent 0 reachability goal set $R \cup \hat{1}$. 
    We now claim that $\pi_1$ is a winning strategy in $\mathcal{R}$ This is a slight abuse of notation as strategies in $\mathcal{R}$ and $\mathbb{G}_{\mathcal{R}}$ are technically of different types, as Agent 1 strategies in $\mathbb{G}_{\mathcal{R}}$ must suggest actions at all states in $\mathbb{G}_{\mathcal{R}}$, not just the ones controlled by Agent 1 in $\mathcal{R}$. We bridge this gap by declaring that the output of $\pi_1$  at states $v \in V_0$ is arbitrary, as their choice does not affect the execution of the game $\mathcal{R}$.

    Suppose that $\pi_1$ was not a winning strategy in $\mathcal{R}$, and that there was some Agent 0 strategy $\pi'_0$ in $\mathcal{R}$ such that $\langle \pi'_0 ,\pi_1 \rangle$ was winning for Agent 0 in $\mathcal{R}$. Then, by following this strategy exactly in $\mathbb{G}_{\mathcal{R}}$, $\langle \pi'_0 ,\pi_1 \rangle$ would visit $R \cup \hat{1}$ and satisfy Agent 0's reachability goal. This is due to the fact that $\pi'_0$ would not suggest an illegal move, as illegal moves are not possible in $\mathcal{R}$, and, other than the illegal move exceptions, the transitions in $\mathcal{R}$ are the same as $\mathbb{G}_{\mathcal{R}}$. This is a contradiction, and so $\pi_1$ must be a winning strategy in $\mathcal{R}$.

    ($\leftarrow$) Suppose that Agent 1 has a winning strategy $\pi_1$ in $\mathcal{R}$. Then, by extending this to an Agent 1 strategy $\pi_1$ in $\mathbb{G}_{\mathcal{R}}$ (declaring arbitrary outputs at states $v \in V_0$ as discussed before), the strategy profile $\pi = \langle \pi_0, \pi_1 \rangle$ is a $W$-NE for all Agent 0 strategies $\pi_0 \in \Pi_0$. The reasoning is largely the same: if it were not the case and Agent 0 had a profitable deviation, then this profitable deviation would disprove the assumption that $\pi_1$ was a winning strategy in $\mathcal{R}$.

    This gives us both directions of the proof, giving us the validity of our reduction. It is also important to note that if no winning strategy for Agent 1 exists in $\mathcal{R}$, then one must exist for Agent 0 (see Section~\ref{reachabilitygamesection} or~\cite{McNaughton1993InfiniteGP}), meaning that if an $\emptyset$-NE does not exist in $\mathbb{G}_{\mathcal{R}}$, then Agent 0 has a winning strategy. 
\end{proof}

This reduction then gives us

\begin{manualtheorem}{4.3}
    For the explicit model, the $W$-NE realizability problem is PTIME-complete.
\end{manualtheorem}

\subsection{Circuit-Based Realizability Background and Lower Bound Reduction}\label{circuitbasedrealizabilitylowerboundappendix}

This section corresponds to Section~\ref{lowerboundcircuitrealizability}.

For the purposes of our reduction, we only provide a brief overview of how alternating machines function; readers looking for further detail about alternation and alternating Turing machines are referred to~\cite{sipser2006}.

\begin{defi}\label{altmachine}
    
An alternating Turing Machine $M$ is a $5$-tuple $M = \langle R, \Gamma , r_0, \Delta,\kappa \rangle$ with the following interpretations: 
\begin{enumerate}
\item  $R$ is the set of states. $r_0 \in R$ is the initial state.
    \item $\Gamma$ is the alphabet. WLOG, we assume it contains an empty symbol $\bot$.
    \item $\Delta : R \times \Gamma \rightarrow \mathcal{P}(R \times \Gamma \times \{\rightarrow,\leftarrow \} )$ 
        is the transition function. The set $\{\rightarrow,\leftarrow \}$ represents the left and right directions -- how the head moves over the tape in the Turing machine. Since this is an alternating Turing machine, the range of the transition function is a powerset representing all possible sets of transitions available from a single state-character pair. For notational convenience, we assume that the range of $\Delta$ on an arbitrary input has at most size two, i.e., there are only two possible successors from each state-character pair. This assumption does not change the complexity-theoretic properties of the decision problem from which we wish to reduce~\cite{alur2002alternating,papacomplexity}. 
    \item $\kappa: R \rightarrow \{ \mathrm{accept}, \mathrm{reject}, \vee, \wedge, -\}$ is the state labeling function that associates each state with its type. A state can be accepting, rejecting, existential ($\vee$), universal ($\wedge$), or deterministic ($-$).
\end{enumerate}

\end{defi}

Intuitively, an alternating Turing Machine generalizes the concept of a nondeterministic Turing Machine. In a nondeterministic machine, it is possible that there are multiple valid transitions from a single configuration. At these ``existential" states,  the machine may ``choose" the outcome that best leads to acceptance. Alternation extends this notion by adding universal states alongside these existential states. Intuitively, the universal states are the dual of existential states, as from a universal state, it must be demonstrated that acceptance can be ensured from \emph{all} valid transitions. This is often modeled through a two-agent game in which one agent controls the transitions at existential states and is incentivized to witness a computation that ends in acceptance, and the other agent controls the universal states and is incentivized to not witness an accepting computation. We now introduce the problem we wish to reduce from.

\begin{problem*}
    Given an alternating Turing machine $M$ and a natural number $n \in \mathbb{N}$ written in unary, does $M$ accept the empty using at most $n$ cells?
\end{problem*}

Note that if the machine $M$ ever moves ``out-of-bounds" and reads a cell outside of the $n$ allowed, the computation is automatically considered to be rejecting. This problem is complete for the class APSPACE, which is equivalent to the class EXPTIME~\cite{alternation} (and therefore, the problem is EXPTIME-complete). Note that there is a requirement that $M$ accept the tape using at most $n$ cells, meaning that $M$ is ``space-bounded". A crucial tool for reasoning about space-bounded machines is the concept of a machine ID, a tool used to capture the state of a machine $M$ at a specific point in time during a computation~\cite{sipser2006}. This definition applies to all machines that are space-bounded, whether they are alternating, non-deterministic, or deterministic. 

\begin{defi}\label{machineIDdef}
Given a space-bounded Turing machine $M$ with a state-space of $R$ and an alphabet $\Gamma$, an ID is an $n$-character 
string over the alphabet $(\Gamma \cup \{\bot\}) \times (R \cup \{\bot\})$  that captures the complete contents of the tape of $M$, a notion made sensible by the fact that $M$ can only ever use at most $n$ cells. The $ i$-th character of an ID describes the contents of the cell $i$, which may either contain a character from $\Gamma$ or be empty (represented by the new symbol $\bot$). Furthermore, the head may also be over cell $i$, which is reflected by the cross-product with $(R \cup \{\bot\})$, which either specifies the state of the machine or indicates that the head is not over cell $i$ through the $\bot$ character. A valid ID only has the head in one position.
    
\end{defi}

\begin{example}\label{IDexample}
    Consider a machine space-bounded machine $M$ with a binary alphabet, i.e. $\{a,b\}$ and a state space of $R$. Furthermore, consider that $n = 5$, meaning that the machine can only use  $5$ cells. An example of an ID would be $\langle \bot, \bot \rangle, \langle a, \bot \rangle, \langle b, r \rangle, \langle a, \bot \rangle, \langle \bot, \bot \rangle$, which conveys the following information :    
    \begin{enumerate}
        \item The left-most cell (cell $0$) and the right most cell (cell $4$) are both empty.
        \item Cell $1$ reads $a$, cell $2$ reads $b$, and cell $3$ reads $a$.
        \item The head of $M$ is currently positioned over cell $2$ and is in state $r \in R$.
    \end{enumerate}
    As a matter of notational convenience, it is useful to omit the second component of the characters of the ID when they are empty (represented by $\bot$). Therefore, this ID could be more easily written as $ \bot, a, \langle b, r \rangle,  a,  \bot$
\end{example}

Given an alternating Turing machine $M$ and a natural number $n$, we construct the two-agent circuit-based system $\mathcal{G} = \langle {\tt V} , {\tt v_0}, \Omega = \{0,1\} , {\tt A} = \{ {\tt A_0 , A_1} \} ,\G = \{{\tt G_0}, \emptyset\}, \varphi \rangle$. In order to make the reduction easier to parse, we introduce the precise definition of these components on-the-fly while detailing the validity of the reduction.

\begin{manuallemma}{4.5}
    $M$ accepts the empty tape iff there is no $\emptyset$-NE in $\mathcal{G}$.
\end{manuallemma}

\begin{proof}
    As mentioned previously, alternating Turing machines are often studied through the lens of a two-agent reachability game (as in Definition~\ref{reachabilitygamedef}), in which an ``existential" agent, incentivized to demonstrate acceptance by $M$, is pitted against a ``universal" agent, incentivized to demonstrate rejection by $M$. This pattern is reflected in the construction of the two-agent system $\mathcal{G}$. The high-level idea behind this reduction is that the two agents play a reachability game over the space of machine IDs of $M$ (see Definition~\ref{machineIDdef}). 
    
    Given the inputs $M$ and $n$, the number of possible machine IDs is given by $(|(\Gamma \cup \{\bot\})| \cdot |(R \cup \{\bot\})|)^{n}$. Recall that $2^{\tt V}$ is the explicit state set of $\mathcal{G}$. We intend this set to encode the set of all machine IDs 
    The size of ${\tt V}$, an implicit representation of the space of machine IDs, is therefore the logarithm of $(|(\Gamma \cup \{\bot\})| \cdot |(R \cup \{\bot\})|)^{n}$, which is polynomial in the size of the input of $M$ and $n$. The initial state, therefore, corresponds to the initial configuration of the tape, which consists of all empty cells and a head in state $r_0$ positioned over the leftmost cell.
    
    In this system, Agent $0$ assumes the role of the existential agent, and Agent $1$ assumes the role of the universal agent. Agent $0$'s reachability goal, therefore, consists of the set of machine IDs in which $M$ is a final state. The circuit ${\tt G_0}$ is then simply a gadget that recognizes such machine IDs, alongside a circuit-based implementation of $\kappa$, which tells which machine states are accepting, and can therefore be constructed in polynomial time. As in the proof of Lemma~\ref{explicitrealizabilitylowerboundlemma}, Agent $1$'s goal is empty.   

    The explicit set of actions for each agent is given by $R \times \Gamma \times \{\rightarrow, \leftarrow\}$, which corresponds to the possible movements and actions of the head. This set has polynomial size with respect to the input, meaning that the representations of ${\tt A}_0$ and ${\tt A}_1$ in the input can be constructed in polynomial time. The game proceeds by each Agent suggesting a possible transition of the head, which includes the state it transitions to ($R$), the character it writes ($\Gamma$), and the direction it moves in ($\{\rightarrow, \leftarrow \}$). 

    These actions are evaluated by the transition function $\varphi$. Note that although both agents select an action,  only one agent's action is considered at a time; if the state of the machine $M$ is existential or deterministic in the current ID, then Agent $0$'s action is considered; otherwise, at universal states, Agent $1$'s action is considered. The transition function $\varphi$ determines which Agent is acting based on the state in $R$ and $\kappa$, and determines its legality based on $\Delta$. If either Agent chooses a transition that moves the head outside of the bounds of the $n$ cells considered, the system moves to a new non-accepting sink state. If Agent $0$ chooses a transition that is not allowed by $\Delta$, then the system immediately moves to the new non-accepting sink state. If Agent $1$ chooses a similarly illegal transition, the system immediately moves to an arbitrary accepting configuration.  This reflects the fact that Agent $0$ is looking to demonstrate an accepting computation and Agent $1$ is looking to demonstrate a rejecting one, which is why the ``penalties" associated with each agent choosing an illegal transition are different -- essentially, this is the same ``illegal-move" mechanism created by the states $\hat{1}$ and $\hat{0}$ in Section~\ref{explicitlowerboundrealizabilityappendix}. Otherwise, if there are no penalties to apply and the head stays in bounds, the transition updates the state based on the local information around the state. Since the update only happens at the head location and one adjacent cell, the circuit of the transition function $\varphi$ can be constructed in polynomial time. This can be seen by noting that the size of the relevant domain of the function that $\varphi$ computes is no larger than $|R \times \Gamma|^3 \times |R \times \Gamma \times \{\rightarrow, \leftarrow\}|$ (the contents of three cells, including the head and the at most two cells that surround it, and an agent action), which is polynomial in the size of the input. 

As mentioned at the start of the proof, acceptance in $M$ can be analyzed through the lens of a reachability game over the space of IDs of $M$. In Lemma~\ref{explicitrealizabilitylowerboundlemma}, we showed that the problem of deciding which agent wins in a reachability game can be effectively reduced to determining the existence of an $\emptyset$-NE in an explicit system $\mathbb{G}$. In this reduction, we are dealing with a succinctly represented reachability game and a circuit-based system. The construction of $\mathbb{G}$ in Lemma~\ref{explicitrealizabilitylowerboundlemma} and $\mathcal{G}$ in this proof are deliberately analogous, containing common elements such as ``penalty states" that disincentivize agents from making ``illegal" moves. Therefore, we can reapply the logic of Lemma~\ref{explicitrealizabilitylowerboundlemma}. In the reachability game version of $M$, the existential Agent $0$ attempts to reach a state corresponding to an accepting ID by choosing beneficial transitions when the head is in an existential state, and the universal Agent $1$ attempts to prevent this by choosing antagonistic transitions when the head is in a universal state. By applying  Lemma~\ref{explicitrealizabilitylowerboundlemma} to this succinctly represented reachability game, we can conclude that an $\emptyset$-NE exists in $\mathcal{G}$ iff Agent $1$ has a winning strategy in the reachability game representation of $M$. Suppose an $\emptyset$-NE strategy profile $\pi = \langle \pi_0, \pi_1 \rangle$ exists. By Lemma~\ref{explicitrealizabilitylowerboundlemma}, this means that $\pi_1$ is a winning strategy in the reachability game representation of $M$. Therefore, no matter how the transitions of $M$ are resolved at the existential states, it must be the case that when the universal states are resolved according to $\pi_1$, an accepting ID is never seen in $M$. This means that regardless of the transitions chosen at the existential states, the computation tree of $M$ must have at least one non-accepting branch (the one corresponding to the universal transitions being resolved according to $\pi_1$), and so $M$ must reject.
    
Now suppose that no such $\emptyset$-NE exists. This means that there is no winning strategy $\pi_1$ for Agent $1$ in the reachability game representation of $M$. Since there is no Agent $1$ winning strategy, it must be the case that there is an Agent $0$ winning strategy $\pi_0$. The logic now flows the same way. A winning strategy for Agent $0$ corresponds to transitions being made at the existential states, such that, regardless of the choices made at the universal states, an accepting ID must be eventually reached. Therefore, by resolving the existential transitions according to $\pi_0$, it must be the case that all branches of the computation tree of $M$ are accepting, and so $M$ accepts.
    
\end{proof}

The correctness of the reduction gives us our hardness result.

\begin{manualtheorem}{4.6}
    For the circuit-based model, the $W$-NE realizability problem is EXPTIME-complete. 
\end{manualtheorem}

\subsection{Construction of $\pi$ and Explicit Verification Upper Bound}\label{explicitverificationupperboundappendix}

This section corresponds to Section~\ref{upperboundexplicitverification}.

Given $k$ individual strategies represented by $k$ individual transducers, $\pi_i = \langle S_i , s^i_0 , D, \gamma^i, O_i \rangle$, we construct the cross-product transducer $\pi = \langle S, s_0, D, \gamma, O \rangle$ as follows:
\begin{enumerate}
    \item $S$, the set of states in $\pi$, is the cross-product of the set of states from each $\pi_i$. Therefore, $S  = \bigtimes_{i \in \Omega} S_i$. $s_0 \in S$, the initial state, corresponds to the tuple consisting of each initial state, i.e. $s_0 = \bigtimes_{i \in \Omega} \{s^i_0\}$.
    \item The input alphabet $D$ remains unchanged.
    \item The transition function $\gamma : S \times D \rightarrow S$ works component-wise. For a state $\langle s^0, s^1 \ldots s^{k-1} \rangle  \in S$ (where each $s^i \in S_i$), upon reading a character $d \in D$, the transducer transitions to $\langle \gamma^0(s^0, d),\gamma^1(s^1, d)  \ldots \gamma^{k-1}(s^{k-1},d )\rangle$.
    \item The output function is similarly defined component-wise. $O(\langle s^0, s^1 \ldots s^{k-1} \rangle) =  \\ \langle O_0(s^0), O_1(s^1) \ldots O_{k-1}(s^{k-1})\rangle$. An important point to note is that since $O_i(s^i) \in A_i$ for every agent $i$, we have $O(s) \in \bigtimes_{i \in \Omega} A_i = D$. 
\end{enumerate}

The cross-product representation of $\pi$ has exponential size with respect to the input strategies, as it involves taking an unbounded cross-product of the $k$ individual $\pi_i$ transducers, which were given as input. We now take the cross-product of $\pi$ with $\mathbb{G}$ to create $\mathbb{G} \times \pi = \langle V \times S, \langle v_0,s_0 \rangle, \tau, \gamma, G = \{ G_0 \ldots G_{k-1} \} \rangle$. This simple transition system analyzes the outcome of all agents following $\pi$ in $\mathbb{G}$. The set of states is the cross product $V \times S$, with the initial state $\langle v_0, s_0\rangle$, the pair of initial states of both $\mathbb{G}$ and $\pi$. At a given state $\langle v,s\rangle \in V \times S$, the system transitions to $\langle \tau(v, O(s)) , \gamma(s,O(s)) \rangle$, representing the outcome of all agents playing the actions suggested by $\pi$ and each of the two components transitioning accordingly. The goal sets $G_i$ are carried over from the original specification of $\mathbb{G}$ in order to easily recreate the reachability goal of Agent $i$ in $\mathbb{G} \times \pi$ as the set $\{\langle v,s \rangle | v \in G_i \}$. Note that since $\mathbb{G} \times \pi$ can be viewed a Markov Chain (a probabilistic transition system, see~\cite{Puterman94} for more details) in which all transitions are deterministic, the execution of $\mathbb{G} \times \pi$ corresponds to a unique trace -- the primary trace of $\pi$ in $\mathbb{G}$.

Since the $W$-Nash equilibrium solution concept involves testing whether an agent has a profitable unilateral deviation or not, it is also important to analyze what happens when every agent but one follows $\pi$. Essentially, this would be $\mathbb{G} \times \pi$ with one ``free agent" that selects their own actions instead of the ones suggested by $\pi$. To this end we construct
a deterministic Markov Decision Process (MDP) (a Markov Chain augmented by a single agent that chooses actions that influences the transitions, see~\cite{Puterman94} for more details) $\mathbb{G}_i \times \pi =  \langle V \times S, \langle v_0,s_0 \rangle, \tau,\gamma, A_i, G = \{ G_0 \ldots G_{k-1} \} \rangle$, which augments $\mathbb{G} \times \pi$ with the set of actions $A_i$,  to analyze the possibilities that occur when Agent $i$ deviates. This system functions largely the same as $\mathbb{G} \times \pi$, the only difference is that Agent $i$ is allowed to choose his action at each state as opposed to the one suggested by $\pi_i \in \pi$. Basically, in $\mathbb{G} \times \pi$, at a state $\langle v,s \rangle \in V \times S$, transition always occurs as if the agents had selected $\langle O_0(s^0) \ldots O_i(s^i) \ldots  O_{k-1}(s^{k-1}) \rangle$ as their collective decision. In $\mathbb{G}_i \times \pi$, Agent $i$ is allowed to choose the action in the $i$-th slot regardless of what is suggested by $O_i(s^i)$. Therefore, if they choose action ``$a$" instead, then transition proceeds as if $\langle O_0(s^0) \ldots a \ldots  O_{k-1}(s^{k-1}) \rangle$ were chosen in $\mathbb{G} \times \pi$ instead of $\langle O_0(s^0) \ldots O_i(s^i) \ldots  O^{k-1}(s^{k-1}) \rangle$. The effect of such a change is trivially computable given $\tau$ and $\gamma$.

With the construction of $\mathbb{G} \times \pi$ and $\mathbb{G}_i \times \pi$ in hand, we can move on to considering the verification problem. Essentially, it consists of two different types of queries (see also~\cite{RV22}): 
\begin{enumerate}
    \item Is it the case that when $\pi$ is followed in $\mathbb{G}$, precisely the agents in $W$ meet their goal?
    \item Is it the case the for agents $j \not \in W$, they have no unilateral profitable deviation from $\pi$?
\end{enumerate}
These queries directly test the conditions of the $W$-NE solution concept.

 Before discussing the complexity of the two queries, it is important to recall that $\mathbb{G} \times \pi$ has exponential size w.r.t. the input strategies due to the presence of the unbounded cross-product construction $\pi$. Therefore, we never explicitly construct $\mathbb{G} \times \pi$ (or $\mathbb{G}_i \times \pi$); instead, we reason about these constructions through on-the-fly algorithms. Furthermore, it is also important to note that since $\pi$ is a finite-state transducer, the primary trace of $\pi$ in $\mathbb{G}$ must be eventually periodic, i.e., ending in an infinitely-repeating cycle.

For the first type of queries, we follow the primary trace of $\pi$ by following the transitions in $\mathbb{G} \times \pi$ state-by-state. Once we compute the successor of a state $\langle v,s \rangle \in \mathbb{G} \times \pi$, we delete the previous state from memory and store the successor state. In this way, we can follow the primary trace of $\pi$ by reusing space to store at most two states at once. This ``following" algorithm can be performed in PSPACE, as states in an exponentially large set $V \times S$ have polynomial-size encodings. At some point, the algorithm uses non-determinism to ``guess" that it has entered the periodic portion of the primary trace. It does this by guessing the state at the start of the cycle and then verifying that $\mathbb{G} \times \pi$ eventually returns to this state. Once $\mathbb{G} \times \pi$ returns to a state, its deterministic transitions ensure that no additional states will be visited on the infinite primary trace. This ``cycle-guessing" portion of the algorithm can be performed in NPSPACE=PSPACE~\cite{VW94}, as it uses non-determinism but still stores a constant number of states. It is then trivial to check if an Agent's goal is eventually visited on the primary trace of $\pi$, giving us a PSPACE upper bound for the first type of queries.

The second type of queries amounts to seeing if an Agent $j \not \in W$ has a deterministic strategy to reach a state $\langle v,s\rangle$ with $v \in G_j$ in $\mathbb{G}_j \times \pi$. Since $\mathbb{G}_j \times \pi$ is an MDP, one of the optimal strategies in $\mathbb{G}_j \times \pi$ must be a \emph{policy}, a strategy that is both deterministic and memoryless~\cite{Puterman94}. Note that since all of the transitions in $\mathbb{G}_i \times \pi$ are deterministic, if it is possible to reach a set with positive probability in $\mathbb{G}_j \times \pi$, then it is possible to reach it with probability $1$. Therefore, we can use non-determinism to guess the actions recommended by this optimal policy state-by-state. So, at each state, we non-deterministically guess an action from $A_i$ and then transition according to $\tau$ and $\gamma$ in $\mathbb{G}_j \times \pi$. As argued before, this algorithm is in NPSPACE = PSPACE. Furthermore, since PSPACE is closed under complementation~\cite{papacomplexity}, we can also test whether Agent $j$ \emph{cannot} reach a state $\langle v,s\rangle$ with $v \in G_j$ in PSPACE as well. This gives us a PSPACE upper bound for both families of queries, and therefore, a PSPACE upper bound for the verification problem.

\begin{manualtheorem}{5.1}
    For the explicit model, the $W$-NE verification problem is in PSPACE.
\end{manualtheorem}

\subsection{Explicit Verification Lower Bound Reduction}\label{explicitlowerboundverificationappendix}

This section corresponds to Section~\ref{lowerboundexplicitverification}.

As a reminder, the problem we are reducing from is:
\begin{problem*}
    Given a deterministic Turing Machine $M$ and a natural number $n \in \mathbb{N}$ written in unary, does $M$ accept the empty tape using at most $n$ cells?
\end{problem*}

This problem is the deterministic counterpart to the one considered in Section~\ref{circuitbasedrealizabilitylowerboundappendix} and is a canonical PSPACE-complete problem~\cite{sipser2006}.  We now show that this problem reduces to the problem of $W$-NE verification for explicit systems, using the notation $M = \langle R , \Gamma, r_0 ,\Delta, F \rangle $ for $R$ a set of states, $\Gamma$ an alphabet, $r_0 \in R$ an initial state, $\Delta$ a transition function, and $F \subseteq R$ a set of final states. $M$ is specified in almost the same way as the alternating Turing Machine $M$ in Definition~\ref{altmachine}, with only two key differences. The first is that the transition function $\Delta$ is now deterministic, so it simply has type $\Delta : R \times \Gamma \rightarrow R \times \Gamma \times \{\rightarrow,\leftarrow \}$. The second is that there is no need to consider universal, existential, or rejecting states, so we simply have a set $F \subseteq R$ of accepting states. 

We now present the reduction, which reduces the space-bounded Turing Machine acceptance problem to the verification of an $\Omega$-NE in a system $\mathbb{G} = \langle V, v_0, \Omega = \{0 \ldots k-1\}, A = \{A_0 \ldots A_{k-1} \}, G = \{ G_0 \ldots G_{k-1} \}, \tau \rangle$, where the input strategies are specified by finite-state transducers of the form $ \pi_i =\langle S_i , s^i_0 , D, \gamma^i, O_i \rangle$. As before, the precise definitions of the components of $\mathbb{G}$ and $\pi = \langle \pi_0 \ldots \pi_{k-1} \rangle$ are given ``on-the-fly" in the proof to increase readability.

\begin{manuallemma}{5.2}
    The machine $M$ accepts the empty tape in at most $n$ iff the strategy profile $\pi$ is an $\Omega$-NE in $\mathbb{G}$.
\end{manuallemma}

\begin{proof}

The main idea behind this reduction is to simulate the IDs of the Machine $M$ in a distributed manner by using the state spaces $S_i$ of the transducers in the tuple $\pi = \langle \pi_0 \ldots \pi_{k-1} \rangle$. 

At a lower level, Agent $i$ uses the state-space $S_i$ of the strategy $\pi_i$ to keep track of the $i$-th character of the ID (see Definition~\ref{machineIDdef} and Example~\ref{IDexample} about machine IDs). In order to do so, they need information about the $i-1$st, $i$th, and $i+1$st characters of the previous ID (with trivial exceptions for Agent $0$, who does not have a $i-1$st cell, and Agent $k-1$, who does not have an $i+1$st cell). In this way, Agent $i$ keeps track of a ``subID" of the machine -- instead of accounting for the entire contents of the tape, they track the characters in the $i-1$st, $i$th, and $i+1$st position. The state space of $\pi_i$ is then the set of all possible subIDs. It is crucial to note that each subID consists of a fixed number of characters, as this means the total number of subIDs is polynomial in the size of the input. The initial state of each $\pi_i$ then corresponds to the relevant subID of the initial configuration of the tape, which is empty except for the head, which we assume is positioned over the leftmost cell. 

Let us now consider an example where $\pi_i$ is in state $0,\langle 1,r \rangle,0$. Using our notation for machine IDs (Example~\ref{IDexample}) to subIDs, this state means that the $i-1$st cell reads $0$, the $i$th cell reads $1$, and the $i+1$st cell reads $0$. Furthermore, the head is positioned over cell $i$ and is in state $r$. The output function $O_i$ then provides the details of the transition of $M$ as an element of $R \times \Gamma \times \{\rightarrow, \leftarrow\}$, which specifies which state $M$ transitions to, which character is written in cell $i$, and which direction the head moves in. This set $R \times \Gamma \times \{\rightarrow, \leftarrow\}$ the set of actions $A_i$ for Agent $i$. Since this game is turn-based, the decision set $D$ is no longer exponential in the number of agents; since only one agent chooses an action at a time, there is no need to consider a joint action space. Therefore, the transition table for each $\pi_i$ can be represented in polynomial space. 

This brings us to the multi-agent system $\mathbb{G}$, which, as mentioned before, is a turn-based game. It has a state space $V = V_1 \cup V_2 \cup V_3$ consisting of three components. The first component, $V_1$, consists of $n$ states labeled by $0 \ldots n-1$. The next component $V_2$ is the cross product of the set of agents $[n]$ and a set of actions $A_i$, given by  $[n] \times R \times \Gamma \times \{\rightarrow, \leftarrow\}$. The final component $V_3$ consists of two special states labeled ``accept" and ``reject".

The dynamics of $\mathbb{G}$ are relatively straightforward. When the head is over cell $i$ in $M$, then the multi-agent system $\mathbb{G}$ is in state $i \in V_1$. The game is turn-based, so state $i$ is owned by Agent $i$. Agent $i$ uses the transducer $\pi_i$ to output an action corresponding to the relevant transition information of the machine $M$. 

The multi-agent system $\mathbb{G}$ then transitions to the state corresponding to the cross product of $i$ and the action that was just output. This allows the other agents to observe the transition that was made at cell $i$ and update their own $\pi_i$ state if necessary. For example, suppose that Agent $i$ used his action to announce that the head was moving right in state $r$ and that the character $0$ was written at cell $i$. This means that cell $i$ and cell $i+1$ will update their values in the ID, providing relevant information to update the states of $\pi_{i-1},\pi_i,\pi_{i+1},$ and $\pi_{i+2}$. If, however, the change occurred at $i = 5$, then Agent $0$, for example, does not need to update the state of $\pi_0$, as $S_0$ only tracks the contents of cell $0$ and $1$. The system then transitions from the state in $V_2$, reflecting the transition information, to the state $i + 1 \in V_1$, reflecting the new position of the head (which has moved rightwards from cell $i$). Transitions from states in $V_2$ ignore the action choices of the agents. In our example, it is now Agent $i+1$'s turn to select the action output by $\pi_{i+1}$. We assume that the head starts at the leftmost cell, and so the initial state is the cell $0$ in $V_1$.

Transitions continue to occur in this manner with two exceptions. First, if the head is ever in a state in $F$, then the computation has accepted, and the system transitions to the state accept $\in V_3$. This state ``accept" is the reachability goal of every agent. Second, if the head ever moves out of bounds, for example, if it is in a state $v \in V_2$ that suggests moving the head left from position $0$, then the system transitions to the sink ``reject" state, from which it is no longer possible to reach the reachability goal state ``accept". 

It is then straightforward enough to see that $M$ accepts the empty tape using at most $n$ cells iff $\pi$ is an $\Omega$-NE in $\mathbb{G}$. Together $\mathbb{G}$ and $\pi$ simulate the computation of $M$. Since we are asking for the existence of an $\Omega$-NE, all agents have to meet their goal on the primary trace of $\pi$ in $\mathbb{G}$ (since they all have the same reachability goal). Note that the verification of an $\Omega$-NE is reasoned about entirely through the primary trace, as there are no agents $j \not \in W$ that consider deviation. Furthermore, note that all constructions in $\mathbb{G}$ and $\pi$ can be done in polynomial time. The state space of a strategy $\pi_i$ is at most $|\Gamma \times R|^3$, the set $V$ of states consists of polynomial-sized constructions from $M$, etc. The most critical point here is $\tau$, the transition table. Just as we did for the transducers $\pi_i$, we take explicit advantage of the fact that this is a turn-based game (see Definition~\ref{reachabilitygamedef}), employing the same strategy as~\cite{BBMU15} and other works. This means that the number of rows in $\tau$ has a polynomial upper bound of $|V| \cdot |[n] \times R \times \Gamma \times \{ \rightarrow , \leftarrow\}|$, making this a valid polynomial-time reduction.  
\end{proof}

\begin{manualtheorem}{5.3}
        For the {\bf turn-based} explicit model, the $W$-NE verification problem is PSPACE-complete.
\end{manualtheorem}

\subsection{Circuit-Based Verification Lower Bound Reduction}\label{circuitlowerboundverificationappendix}

This section corresponds to Section~\ref{lowerboundcircuitverification}.

We reduce from the same problem as in Section~\ref{explicitlowerboundverificationappendix}.

\begin{problem*}
    Given a deterministic Turing Machine $M$ and a natural number $n \in \mathbb{N}$ written in unary, does $M$ accept the empty tape using at most $n$ cells?
\end{problem*}

As in Section~\ref{circuitbasedrealizabilitylowerboundappendix}, our reduction does not focus on low-level details of circuit construction, and the constructions of the system and strategy are done on-the-fly to increase readability. We once again use the same specification for $M = \langle R , \Gamma, r_0 ,\Delta, F \rangle $ as in Section~\ref{explicitlowerboundverificationappendix}. In the reduction here, however, we use a one-agent system $\mathcal{G} = \langle {\tt V}, {\tt v_0 }, \Omega = \{0\},   {\tt A} = \{ {\tt A_0} \}, {\tt G } = \{ {\tt G_0} \} , \varphi \rangle$ and a strategy profile consisting of a single strategy $\pi = \langle \pi_0 \rangle$. Since there is only one strategy $\pi_0$, we write $\pi=\pi_0 = \langle \Salt , \s_0 , {\tt D}, \omega, \Oalt \rangle$, removing the $i$-indexing (see Definition~\ref{circuittransducermodel}). As before, we give the precise definitions of these components on-the-fly in order to increase readability. 

\begin{manuallemma}{5.5}
    The machine $M$ accepts the empty tape in at most $n$ iff the strategy profile $\pi$ is an $\Omega$-NE in the one-agent system $\mathcal{G}$.
\end{manuallemma}
\begin{proof}
     Just as in Section~\ref{explicitlowerboundverificationappendix}, we use the IDs of the machine $M$ to track the computation of $M$. As further discussed in Section~\ref{circuitbasedrealizabilitylowerboundappendix}, the IDs of a machine $M$ (see Definition~\ref{machineIDdef} and Example~\ref{IDexample} about machine IDs) that can use at most $n$ cells is given by the set of $n$-length strings over the alphabet $(R \cup \{\bot\}) \times (\Gamma \cup \{\bot\})$.

    The state set in $\mathcal{G}$ is expressed using a set $V$ of boolean variables, which encode the set of $n$-length strings over $(R \cup \{\bot\}) \times (\Gamma \cup \{\bot\})$. There are $(|(R \cup \{\bot\})| \times |\Gamma \cup \{\bot\}|)^n$ such strings, meaning that this state set can be encoded using polynomially many boolean variables. Furthermore, unlike Section~\ref{explicitlowerboundverificationappendix}, here $\mathcal{G}$ only has a single agent.

    The output function ${\tt O}$ of the single strategy $\pi_0$ is  a circuit encoding of $\Delta$, with 
    $\Salt = { \tt V}$. 
    Intuitively, the state space $2^{\Salt}$ of $\pi_0$ is also the set of IDs. When in a state corresponding to a specific ID, ${\tt O}$ locates the position of the head and outputs the transition information of $\Delta$.  This circuit represents a discrete function that can be represented by a table that has $R \times (\Gamma \cup \{\bot\})$ rows, meaning it can be constructed in polynomial time. The set of actions $\tt{A}_0$ is a boolean encoding of the set $[n] \times R \times \Gamma \times \{\rightarrow, \leftarrow\}$, the range of $\Delta$ augmented by a position on the tape given by $[n] = \{0 \ldots n-1\}$, as we specify that ${\tt O}$ also identifies the current position of the head in parallel while computing the output of $\Delta$. This output action then informs the transition function $\varphi \in \mathcal{G}$ on how to update the state ${\tt v} \in 2^{ \tt V}$ (which represents the current ID of $M$) by providing $\varphi$ with the information of where the transition occurred (through $[n]$) and how the tape changed (through the movement and state of the head given by $R \times \{\rightarrow, \leftarrow\}$ and the character written at cell $i \in [n]$ given by $\Gamma$). 
    
    The transition function $\varphi \in \mathcal{G}$ computes the new state corresponding to the new ID of the machine $M$. This involves changing the contents of at most two cells at a time, so it can be computed by a polynomial-size circuit, giving $\varphi$ a polynomial-sized representation. If the computation makes an error, such as moving left when the head is at cell $0$, then the system transitions to a new sink state, just as before. As in Section~\ref{explicitlowerboundverificationappendix}, inputs of actions that do not correspond to the output of $\Delta$ are not a concern, as the verification problem is specifically asking about the behavior of $\pi_0$, a strategy that faithfully recreates the behavior of $\Delta$. Similarly, the transition function $\omega$ of $\pi_0$ mirrors the transition function of $M$ and $\mathcal{G}$. When the system $\mathcal{G}$ moves to a state ${ \tt v } \in { \tt V}$, the transition function $\omega : 2^{\Salt} \times 2^{\tt D} \rightarrow 2^{\Salt} $ moves to the same state in $2^{\Salt}$, as ${\tt V} = \Salt$. Both initial states ${\tt v_0}$ and ${\tt s_0}$ correspond to the starting ID of $M$. 

    The reachability goal of the only agent, Agent $0$, is the set of states ${\tt v} \in 2^{ \tt V}$ that represent an ID in which the head of $M$ is in a final state $r \in F$.  Such a state can be recognized by building a circuit that computes the characteristic function of $F$. 

    As before, the question is then whether $\pi$ represents an $\Omega$-NE, or, in this case, a $\{0\}$-NE. The interaction of $\pi$ and $\mathcal{G}$ computes the IDs of $M$ in a very straightforward way, meaning that if $\pi$ is an $\Omega$-NE then Agent $0$ reaches a state that corresponds to an accepting ID and $M$ accepts the empty tape using at most $n$ cells. This gives us our PSPACE-hard reduction.
\end{proof}

\begin{manualtheorem}{5.6}
For the circuit-based model, the $W$-NE verification problem is PSPACE-complete. 
\end{manualtheorem}

\section{Conclusion and Future Work}\label{Conclusion}
The main point of this paper is that the circuit-based model, as opposed to the explicit model used in the literature, is the right model to use for more precise complexity-theoretic results. The main arguments in support of this claim can be found in the discussion sections, Section~\ref{realizabilitydiscussion} for the realizability problem and Section~\ref{verificationdiscussion} for the verification problem. These discussion sections outline the mathematical issues that arise when the explicit model is used and how the circuit-based model addresses them. 

While this paper is most interested in the mathematical justification behind using the circuit-based model over the explicit one, there are other compelling reasons as well. The field of equilibrium analysis was broadly motivated by a need to mathematically analyze software systems that consisted of multiple independent parts. The use of the explicit table moved away from this motivation by introducing a design pattern that would never appear in production software -- a monolithic table structure that explicitly ranges over a very large set of tuples. Therefore, we see the circuit-based model as a unique opportunity to advance the mathematical rigor and precision of a field alongside its practical applications simultaneously. It is this broad idea of creating more precise and applicable multi-agent systems that motivates our directions of future work. Here, we outline two important directions to consider.

The first direction for future work involves developing probabilistic extensions of the deterministic models considered in this paper. Probability is playing an increasingly important role in software systems~\cite{aibook}, and the mathematical models used in equilibrium analysis should reflect that. Therefore, developing succinct mathematical representations for probabilistic transition functions and strategies is an important direction for future work.  

In this paper, we studied how different parts of the input in a multi-agent system influence the complexity of the verification and realizability decision problems, broadly classifying certain components as ``driving" the total complexity of the results. Furthermore, we factored out the representation of agents' goals from the overall complexity-theoretic results by considering only reachability goals. An important next step for future work is to develop algorithms specifically suited to certain classes of multi-agent systems. For example, different algorithms could be developed to be specifically performant on systems with a relatively large number of agents compared to the size of the transition table, or vice versa. There is also a salient question of designing algorithms that are performant when reasoning about the representation of a single agent's goal dominates the overall complexity of equilibrium decision problems. This case is especially important to the equilibrium-analysis literature, which often considers settings with temporal-logic goals (such as LTL), which end up dominating the overall complexity of the equilibrium decision problem. For an explicit discussion on this point, see~\cite{RV21,RV22}.

\bibliographystyle{alphaurl}
\bibliography{bib}

\end{document}